\DeclareMathAlphabet{\mymathbb}{U}{bbold}{m}{n}
\declaretheorem[{style=definition,numberwithin=section}]{definition}
\declaretheorem[{style=definition,sibling=definition}]{theorem}
\declaretheorem[{style=definition,sibling=definition}]{lemma}
\declaretheorem[{style=definition,sibling=definition}]{corollary}
\declaretheorem[{style=definition,sibling=definition}]{proposition}
\declaretheorem[{style=definition,sibling=definition}]{remark}
\DeclareMathOperator{\supp}{supp}
\DeclareMathOperator{\wt}{wt}
\newcommand{\avgq}{\mathrm{D}_{\mathrm{ave}}}
\newcommand{\bs}{\mathrm{bs}}
\newcommand{\cert}{\mathrm{C}}
\newcommand{\ac}{\mathsf{AC}^0}
\newcommand{\detq}{\mathrm{D}}
\newcommand{\randq}{\mathrm{R}}
\newcommand{\cost}{\mathrm{cost}}
\newcommand{\bnm}{\mathcal{B}_{n,m}}
\newcommand{\prandrtrc}{\mathcal{R}_p}
\newcommand{\bitset}{\{0,1\}}
\newcommand{\bfand}{\mathrm{AND}}
\newcommand{\bfor}{\mathrm{OR}}
\newcommand{\bfxor}{\mathrm{XOR}}
\newcommand{\pso}{\mathrm{PSO}}
\newcommand{\dtsize}{\mathrm{DT}_\mathrm{size}}
\newcommand{\maj}{\mathrm{Maj}}
\newcommand{\var}{\mathrm{{Var}}}
\newcommand{\influ}{\mathrm{{Inf}}}
\newtheorem*{theorem*}{Theorem}
\newtheoremstyle{named}{}{}{\itshape}{}{\bfseries}{.}{.5em}{\thmnote{#3}#1}
\theoremstyle{named}
\providecommand{\keywords}[1]
{
  \textbf{\textit{Keywords---}} #1
}
\title{
    Average-Case Deterministic Query Complexity \\ of Boolean Functions with Fixed Weight 
}
\author{
    Yuan Li \\
    \texttt{yuan\_li@fudan.edu.cn} \\
    Fudan University \\
    \and 
    Haowei Wu \\
    \texttt{hwwu21@m.fudan.edu.cn} \\
    Fudan University \\
    \and
    Yi Yang \\
    \texttt{yyang1@fudan.edu.cn} \\
    Anqing Normal University\\
    Fudan University
}
\date{}
\begin{document}

\maketitle

\begin{abstract}
    We study the \textit{average-case deterministic query complexity} of boolean functions under a \emph{uniform input distribution}, denoted by  $\avgq(f)$,  the minimum average depth of zero-error decision trees that compute a boolean function $f$. This measure has found several applications across diverse fields, yet its understanding is limited.
    
    We study boolean functions with fixed weight, where weight is defined as the number of inputs on which the output is $1$.
    We prove \(\avgq(f) \le \max \left\{ \log \frac{\wt(f)}{\log n} + O(\log \log \frac{\wt(f)}{\log n}), O(1) \right\}\) for every $n$-variable boolean function $f$, where $\wt(f)$ denotes the weight. For any $4\log n \le m(n) \le 2^{n-1}$, we prove the upper bound is tight up to an additive logarithmic term for almost all $n$-variable boolean functions with fixed weight $\wt(f) = m(n)$. 
    
    H{\aa}stad's switching lemma or Rossman's switching lemma [Comput. Complexity Conf. 137, 2019] implies $\avgq(f) \leq  n\left(1 - \frac{1}{O(w)}\right)$ or $\avgq(f) \le n\left(1 - \frac{1}{O(\log s)}\right)$ for CNF/DNF formulas of width $w$ or size $s$, respectively. 
    We show there exists a DNF formula of width $w$ and size $\lceil 2^w / w \rceil$ such that $\avgq(f) = n \left(1 - \frac{\log n}{\Theta(w)}\right)$ for any $w \ge 2\log n$.
\end{abstract}

\keywords{average-case query complexity, decision tree, weight, criticality, switching lemma}

\section{Introduction}

The \textit{average-case deterministic  query complexity} of a boolean function $f$ under a \textit{uniform input distribution}\footnote{In \cite{AD01}, the complexity under distribution $\mu$ is denoted by $\mathrm{D}^\mu (f)$, and $\mu$ could be arbitrary. In this paper, we assume the input distribution $\mu$ is uniform.}, denoted by $\avgq(f)$, 
is the minimum average depth of zero-error decision trees that compute $f$.
This notion serves as a natural average-case analogy of the classic deterministic query complexity $\detq(f)$ and has found applications in query complexity, boolean function analysis, learning algorithms, game theory, and percolation theory. 
Besides that, $\avgq(f)$ is a measure with limited understanding, since $\avgq(f)$ falls outside the class of polynomially-related measures, which includes $\detq(f)$, $\randq(f)$, $\cert(f)$, $\bs(f)$, and $\mathrm{s}(f)$ (see  the summaries in \cite{BD02, ABK16, ABK+21} and Huang's proof of the Sensitivity Conjecture \cite{Hua19}). 
This work is also inspired by Rossman's circuit lower bounds of detecting $k$-clique on Erdős–Rényi graphs in the average case \cite{Ros08, Ros14}. Through this paper, we hope to initiate a comprehensive study on $\avgq(f)$, exploring its implications and applications.

\subsection{Background}

To our knowledge, Ambainis and de Wolf were the first to introduce the concept of \emph{average-case} query complexity \cite{AD01}. They showed super-polynomial gaps between average-case deterministic query complexity, average-case bounded-error randomized query complexity, and average-case quantum query complexity. 

Prior to the conceptualization by Ambainis and de Wolf, average-case query complexity had been studied implicitly since the early days of computer science. Yao \cite{Yao77} noticed that $\avgq^\mu(f)$ (with respect to any distribution $\mu$) lower bounds the zero-error randomized query complexity, i.e., $\avgq^\mu(f) \le \randq_0(f)$.
Furthermore, Yao's minimax principle says the maximum value of $\avgq^\mu(f)$ over all distributions $\mu$ equals $\randq_0(f)$.

O'Donnell, Saks, Schramm, and Servedio  established the OSSS inequality \cite{OSS+05, Lee10, JZ11}: $\avgq^{\mu_p}(f) \ge \frac{\var[f]}{\max_i \influ_i[f]}$ for any boolean function $f$, where $\mu_p$ is the $p$-biased distribution and $\influ_i[f]$ is the influence of coordinate $i$.
By applying the inequality, O'Donnell et al. \cite{OSS+05} showed that $\randq_0 (f) \geq \avgq^{\mu_p}(f) \ge (n/\sqrt{4p(1 - p)})^{2/3}$ for any nontrivial monotone $n$-vertex graph property $f$ with critical probability $p$.
This result made progress on Yao's conjecture \cite{Yao77}, which asserts that $\randq_0(f) = \Omega(n)$ for every nontrivial monotone graph property. 
When $p = 1/2$, we have $\randq_0(f) \ge \avgq(f) \ge n^{2/3}$; Benjamini et al. proved that the lower bound $\avgq(f) \ge n^{2/3}$ is almost tight \cite{BSW05}.

While studying learning algorithms, O'Donnell and Servedio \cite{OS06} discovered the OS inequality: $ (\sum_{i} \hat{f}(\{i\}))^2 \le  \avgq(f) \le \log \dtsize(f) $ for any boolean function $f$, where $\hat{f}(\cdot)$ denotes $f$'s Fourier coefficient and $\dtsize(f)$ denotes the decision tree size. The OS inequality plays a crucial role in learning monotone boolean functions (under the uniform distribution).

The most surprising connection (application) arose in game theory. Peres, Schramm, Sheffield, and Wilson \cite{PSS+07} studied the \emph{random-turn} HEX game, in which two players determine who plays next by tossing a coin before each round. They proved that the expected playing time (under optimal play) coincides with $\avgq(f)$, where $f$ is the $L \times L$ hexagonal cells connectivity function. Using the OS inequality and the results of Smirnov and Werner on percolation \cite{SW01}, Peres et al. proved a lower bound $L^{1.5 + o(1)}$ on the expected playing time on an $L \times L$ board.

\subsection{Our results}

The \emph{weight} of a boolean function, defined as the number of inputs on which the output is $1$, is related to its query complexity. For instance, Ambainis et al. \cite{Amb+16} proved that the quantum query complexity of almost all $n$-variable functions with fixed weight $m$ is $\Theta\left(\frac{\log m}{c + \log n - \log \log m} + \sqrt n\right)$, where $c > 0$ is a constant. 
In contrast, the hardest function with weight $m$ has quantum query complexity $\Theta\left(\left({n\cdot\frac{\log m}{c + \log n - \log \log m}}\right)^{1/2} + \sqrt n\right)$. 
Ambainis et al. \cite{Amb+16} also proved that almost all functions with fixed weight $m \ge 1$ have randomized query complexity $\Theta(n)$ as the hardest one. 

Our first result proves that $\avgq(f) \le \log \frac{m}{\log n} + O(\log \log \frac{m}{\log n})$ for any $n$-variable boolean function $f$ with weight $m \ge 4 \log n$.

\begin{theorem} \label{thm:intro_bnm_ub}
    For every boolean function $f : \bitset^n \rightarrow \bitset$, if the weight $\wt(f) \ge 4\log n$, then
    \begin{equation}
        \label{equ:main_thm_ub}
        \avgq(f)  \leq  \log \frac{\wt(f)}{\log n} + O\left(\log \log \frac{\wt(f)}{\log n}\right).
    \end{equation}
    Otherwise, $\avgq(f) = O(1)$. 
\end{theorem}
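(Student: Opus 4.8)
The plan is to construct, given the $m:=\wt(f)$ inputs in $f^{-1}(1)$ (the \emph{enemies}), an adaptive zero-error decision tree whose $0$-leaves are shallow on average. Since $\avgq(f)=\avgq(1-f)$ and the right-hand side of \eqref{equ:main_thm_ub} is monotone in the weight over the relevant range, I would first reduce to the case $m\le 2^{n-1}$; then the $1$-inputs sit at depth $\le n$ and contribute at most $mn/2^n$ to the average, which is absorbed into the claimed bound, so the real task is to certify $0$'s quickly. I would rewrite $\avgq(f)=\sum_{j\ge 0}\Pr_x[x\text{ is unresolved after }j\text{ queries}]$, where $x$ is uniform and ``unresolved after $j$'' means the depth-$j$ subcube reached by $x$ still contains an enemy. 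Resolving $x$ amounts to fixing a set of coordinates on which every enemy disagrees with $x$, i.e. exhibiting a $0$-certificate at $x$, so the whole problem becomes: design a single consistent tree that drives down the measure of enemy-occupied subcubes as fast as possible.

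\textbf{Reduction to small weight.} The structural core is a base case: any $g$ with $\wt(g)\le 4\log n$ has $\avgq(g)=O(1)$. Granting this, I would query a batch $Q$ of $k:=\lceil\log(m/\log n)\rceil$ coordinates at the root, splitting $\bitset^n$ into $2^k$ subcubes among which the $m$ enemies distribute as $\sum_\rho m_\rho=m$, so the average per subcube is $m/2^k\approx\log n$. Writing the restrictions as $f_\rho$, one has $\avgq(f)\le k+2^{-k}\sum_\rho \avgq(f_\rho)$. The subcubes with $m_\rho\le 4\log n$ cost $O(1)$ each by the base case, while the ``overfull'' subcubes ($m_\rho>4\log n$) are handled recursively; since $\sum_\rho m_\rho=m$, a Chernoff/Markov bound limits both their number and their excess weight. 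The $k$ queries then contribute $\log(m/\log n)$, and the imperfection of the split — the overfull subcubes, whose weights exceed $\log n$ only by logarithmic factors — contributes the additive $O(\log\log(m/\log n))$ term.

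\textbf{Why $\log(m/\log n)$, and the base case.} The gain over the naive $\log m$ comes from the abundance of coordinates. For uniform $x$ and a fixed $k$-set $S$, $S$ fails to be a $0$-certificate at $x$ only if some enemy agrees with $x$ throughout $S$, an event of probability $\approx e^{-m2^{-k}}$; since there are $\binom{n}{k}\approx n^{k}$ candidate sets, the expected number of valid certificates is $\approx n^{k}e^{-m2^{-k}}$, which already exceeds $1$ at $k\approx\log(m/\log n)$. Thus for \emph{most} $x$ a certificate of size $\log(m/\log n)$ exists — far shorter than the $\log m$ that sequential or greedy querying spends, since a single query only halves the surviving enemies in expectation. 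For the base case $\wt(g)\le 4\log n$ this threshold becomes $k=O(1)$, and I would establish $\mathbb{E}_x[\cert^0(x)]=O(1)$ by a first/second-moment argument on the number of valid constant-size certificates, preceded by a short ``halving'' phase that brings the number of surviving enemies below $\log n$ so that a coordinate on which all survivors agree becomes available. The constant $4$ is calibrated to this: forcing every input to require certificates of size $>t$ would need the enemies to form a $t$-surjective family, which takes $\Omega(2^{t}\log n)$ points, so a budget of $4\log n$ only forces $t=O(1)$.

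\textbf{Main obstacle.} The hardest step — and the technical heart — is turning the \emph{existence} of short certificates for most inputs into one \emph{consistent} tree of small average depth, equivalently the base-case lemma. A single tree cannot query each input's private certificate coordinates, since along any path it commits to one coordinate set. The natural device is to maintain the invariant that every enemy-occupied subcube stays ``heavy'' (at least $\log n$ enemies): then at most $m/\log n$ subcubes are occupied at each level, so $\Pr_x[\text{unresolved after }j]\le (m/\log n)2^{-j}$, which sums to exactly $\log(m/\log n)+O(1)$. But uneven splits inevitably spawn ``light'' subcubes with between $1$ and $\log n$ enemies, and controlling their total measure — especially for adversarially structured enemy sets such as near-surjective families, where no coordinate gives a balanced split — is where the genuine difficulty lies and where the extra $O(\log\log(m/\log n))$ slack is spent.
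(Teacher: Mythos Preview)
Your high-level recursion---query a batch of $\approx\log(m/\log n)$ bits, invoke the base case on light subcubes, recurse on heavy ones, bound the heavy fraction by Markov---is essentially the paper's approach. (The paper queries $\log(m/\log n)+\log\log(m/\log n)+O(1)$ bits at once, so the Markov probability of a heavy subcube is $O(1/\log(m/\log n))$ and the inductive cost there contributes only $O(1)$; that, rather than any structural subtlety, is where the $\log\log$ term comes from.)

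The genuine gap is exactly where you flag it: the base case $\wt(g)\le 4\log n \Rightarrow \avgq(g)=O(1)$. Neither of your suggestions closes it. The first/second-moment count only addresses $\mathbb{E}_x[\cert_x(g)]$, and you yourself note that average certificate size does not yield a single decision tree of small average depth. The fallback claim---after halving, ``a coordinate on which all survivors agree becomes available''---is false: with enemies $0^n$ and $1^n$ (weight $2$, any $n$) no coordinate has them agreeing. And your ``heavy-subcube invariant'' is, as you concede, something you do not know how to maintain against adversarial enemy sets.

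The paper's device for the base case is concrete and different. With $m<\log n$ enemies, the $n$ column patterns $c_i=(y^{(1)}_i,\ldots,y^{(m)}_i)\in\{0,1\}^m$ lie in at most $2^{m-1}<n/2$ classes under the equivalence $c\sim c,\ c\sim \neg c$, so by pigeonhole some pair $i_1,i_2$ has $c_{i_1}=c_{i_2}$ (possibly after negating one). Then \emph{every} enemy $y$ satisfies $y_{i_1}=y_{i_2}$, so querying $x_{i_1},x_{i_2}$ certifies $0$ whenever they disagree---probability $\tfrac12$ under uniform $x$. When they agree, query one more coordinate $j$ from a different equivalence class; since at most one class can have constant pattern, one of the two classes is mixed and the restricted weight strictly drops. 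The recursion $\avgq(g)\le \tfrac12\cdot 2+\tfrac12\bigl(3+\avgq(g')\bigr)$ with $\wt(g')\le m-1$ and $n-3>2^{\wt(g')}$ closes to $\avgq(g)\le 5$. This is the missing ingredient.
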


We prove Theorem \ref{thm:intro_bnm_ub} by designing a recursive query algorithm that attains the query complexity given in \eqref{equ:main_thm_ub}. The algorithm queries an arbitrary bit until the subfunction's weight becomes sufficiently small, or more specifically, smaller than the logarithm of its input length; once this border condition is met, we invoke another algorithm which, on average, takes $O(1)$ bits to query the subfunction.

Next, we prove Theorem \ref{thm:intro_bnm_lb}, complementing our first result, which says that Theorem \ref{thm:intro_bnm_ub} is tight up to a lower order term for almost all fixed-weight functions.  

\begin{theorem} \label{thm:intro_bnm_lb} 
    Let $m:\mathbb{N} \to \mathbb{N}$ be a function such that $4\log n \le m(n) \le 2^{n-1}$. For almost all boolean functions $f : \bitset^n \rightarrow \bitset$ with fixed weight  $\wt(f) = m(n)$, 
    \begin{equation}
    \label{equ:main_thm_lb}
    \avgq(f) \ge \min_{x \in \bitset^n} \cert_x(f) \ge \log \frac{\wt(f)}{\log n} - O\left( \log  \log \frac{\wt(f)}{\log n}\right),
    \end{equation}
    where $ \cert_x(f) $ denotes the size of the smallest certificate on input $x$. 
\end{theorem}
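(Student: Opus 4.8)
The plan is to prove Theorem~\ref{thm:intro_bnm_lb} via the chain of two inequalities stated in \eqref{equ:main_thm_lb}. The first inequality, $\avgq(f) \ge \min_{x} \cert_x(f)$, should hold for \emph{every} boolean function and is the easy structural step: any zero-error decision tree computing $f$ must, on each input $x$, follow a root-to-leaf path that fixes enough bits to determine the value $f(x)$, and the queried bits along that path form a certificate for $x$. Hence the depth of the leaf reached by $x$ is at least $\cert_x(f)$, and since this holds pointwise, the average depth (which equals $\avgq(f)$ for the optimal tree) is at least $\min_x \cert_x(f)$. I would state this as a short lemma and dispatch it in a line or two.

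The substance is the second inequality, a lower bound on $\min_x \cert_x(f)$ that must hold for \emph{almost all} functions of the prescribed weight $m = m(n)$. I would work in the uniform probability space on the set $\bnm$ of functions with exactly $m$ ones, i.e., choose the $1$-set uniformly among all $\binom{2^n}{m}$ size-$m$ subsets of $\bitset^n$. The goal is to show that, with probability $1-o(1)$, \emph{every} input $x$ has $\cert_x(f) \ge \log \frac{m}{\log n} - O(\log\log \frac{m}{\log n})$; equivalently, no input admits a small certificate. I would set a target certificate size $k = \lceil \log\frac{m}{\log n} - C\log\log\frac{m}{\log n}\rceil$ for a suitable constant $C$ and bound the probability that \emph{some} input $x$ has a certificate of size at most $k$. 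A certificate of size $k$ on $x$ is a subcube (subcube of codimension $k$) on which $f$ is constant and which contains $x$; so the bad event is that the subcube's $2^{n-k}$ inputs are either all mapped to $1$ or all mapped to $0$.

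The key quantitative step is the union bound over subcubes. There are $\binom{n}{k} 2^k \le (2n)^k$ subcubes of codimension $k$. For a fixed such subcube $S$ with $2^{n-k}$ points, the probability (under the uniform fixed-weight distribution) that $f$ is constantly $1$ on $S$ is at most $\binom{2^n - 2^{n-k}}{m - 2^{n-k}}/\binom{2^n}{m}$, which is roughly $(m/2^n)^{2^{n-k}}$; the probability of being constantly $0$ on $S$ is the dominant term and is about $\binom{2^n - 2^{n-k}}{m}/\binom{2^n}{m} \approx (1 - m/2^n)^{2^{n-k}} \approx \exp(-m \cdot 2^{-k})$. Multiplying by the subcube count, the union bound gives something like $(2n)^k \exp(-m 2^{-k})$, and I would verify that plugging in $k = \log\frac{m}{\log n} - O(\log\log\frac{m}{\log n})$ makes $m 2^{-k}$ large enough (on the order of $\log n \cdot \polylog$) to beat the $(2n)^k = 2^{O(k\log n)}$ factor, driving the whole bound to $o(1)$. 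The constant-$0$ case is the binding constraint because constantly-$0$ subcubes are far more likely than constantly-$1$ ones when $m \ll 2^n$.

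I expect the main obstacle to be the careful choice of the $O(\log\log\frac{m}{\log n})$ slack and tracking it through the union bound so that the two exponents---the $k\log n$ coming from the number of subcubes and the $m 2^{-k}$ coming from a single subcube being monochromatic---balance with room to spare across the full range $4\log n \le m \le 2^{n-1}$. In particular, near the upper end $m \approx 2^{n-1}$ the constantly-$1$ and constantly-$0$ probabilities become comparable, and near the lower end $m \approx 4\log n$ the target $k$ is $O(1)$ and one must be sure the claimed bound is still meaningful; handling both regimes uniformly, rather than just the generic middle range, is the fiddly part. I would also need to replace the informal $\approx$ estimates on the hypergeometric ratios with clean one-sided inequalities (e.g., $\binom{N-a}{m}/\binom{N}{m} \le (1-m/N)^{a}$-type bounds, or their logarithmic versions) to make the union bound rigorous.
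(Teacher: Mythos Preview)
Your proposal is correct and takes a genuinely different route from the paper's own proof.

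The paper proves the certificate lower bound through an intermediate structural notion: it calls a decision-tree path \emph{$\delta$-parity} if at each step the weight of the restricted subfunction lands in $[\tfrac{1-\delta}{2},\tfrac{1+\delta}{2}]$ times the previous weight, and shows (Lemma~\ref{lem:parity_path_2}) via Hoeffding's inequality for sampling without replacement that for a uniformly random $f\in\bnm$, any fixed path of length $\le \epsilon\log m$ is $\delta$-parity except with probability $\exp(-\Omega(\delta^2 m^{1-\epsilon}))$. It then union-bounds over all \emph{ordered} paths of length below the threshold and invokes a separate lemma (Lemma~\ref{lem:parity_path_1}) that being $(t,\delta)$-parity forces $\min_x\cert_x(f)\ge t$. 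So the paper tracks the weight step by step along every short restriction sequence and shows it never collapses to $0$ or fills up to the whole subcube.

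Your approach skips the step-by-step tracking entirely: you directly union-bound over codimension-$k$ subcubes and bound the hypergeometric probability that a fixed subcube is monochromatic by $(1-m/2^n)^{2^{n-k}}\le \exp(-m2^{-k})$ for the dominant all-$0$ case. This is more elementary (no Hoeffding, no auxiliary ``parity'' definition), and the union-bound count is even slightly smaller ($\binom{n}{k}2^k$ subcubes versus $\binom{n}{k}k!\,2^k$ ordered paths), though the $k!$ is asymptotically irrelevant. The paper's route does yield the stronger structural statement that almost all $f\in\bnm$ are $(t,\delta)$-parity, but that extra information is not used anywhere else in the paper; in particular, the application in Theorem~\ref{thm:intro_circuit_lb} only needs the certificate lower bound that you prove. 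Your verification that the exponents balance (taking $C\ge 2$ in $k=\log\tfrac{m}{\log n}-C\log\log\tfrac{m}{\log n}$ so that $m2^{-k}\approx \log n\cdot(\log\tfrac{m}{\log n})^{C}$ beats $k\log n\approx \log n\cdot\log\tfrac{m}{\log n}$) is correct across the full range $4\log n\le m\le 2^{n-1}$, and the low-end case where the claimed bound is $O(1)$ or nonpositive is indeed vacuous.
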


\begin{remark}
For boolean functions with $\wt(f) \ge 2^{n-1}$, we can obtain a similar bound by replacing $f$ with $\neg f$.
\end{remark}

Beyond fixed-weight functions, we also examine CNFs, circuits, and formulas, studying the connection between $\avgq(f)$ and criticality.

Rossman introduced the notion of \emph{criticality}, defined as the minimum value $\lambda \ge 1$ such that the following property holds: $\Pr_{\rho \sim \mathcal R_p} [\detq(f|_\rho) \ge t] \le (p\lambda)^t$ for any $p \in [0,1]$ and $t \in \mathbb N$. 
In terms of criticality, H{\aa}stad's switching lemma says every width-$w$ CNF is $O(w)$-critical \cite{Has86}; Rossman's switching lemma says every size-$s$ CNF is $O(\log s)$-critical \cite{Ros17, Ros19}; Rossman proved depth-$d$ size-$s$ $\ac$ circuits are $O(\log s)^{d - 1}$-critical \cite{Ros19}; Harsha et al. proved depth-$d$ size-$s$ $\ac$ formulas are $O(\frac{1}{d} \log s)^{d - 1}$-critical \cite{HMS23}.

For any $\lambda$-critical function $f$, applying a $(\frac{1}{2\lambda})$-random restriction and then querying the resulting subfunction via its optimal decision tree yields $\avgq(f) \leq n\left(1 - \frac{1}{\lambda}\right) + O\left(\sqrt{\frac{n}{\lambda}}\right)$ (Lemma \ref{lem:cir_general_ub} from Section \ref{ch:circuits}). 
Hence, criticality bounds imply average-case query complexity bounds for CNFs, formulas, and circuits.

For CNFs, circuits, or formulas, it is meaningful to understand whether the upper bounds on $\avgq(f)$ are tight or not. 
For example, consider a $w$-CNF $f$.
By Lemma \ref{lem:cir_general_ub} from Section \ref{ch:circuits}, we have $\avgq(f) \le n \left(1 - \frac{1}{O(w)}\right)$. If the bound were indeed tight, it would suggest that the $p$-random restriction, with $p = \frac{1}{10w}$, is essentially an optimal query algorithm for generic $w$-CNFs. 
Otherwise, either a better query algorithm exists, or a stronger version of the switching lemma can be established. Either way, the answer would be interesting.

Along this line, we show that there exists a DNF formula of width $w$ and size $\lceil2^w / w\rceil$ with $\avgq(f) = n(1 - \frac{\log n}{\Theta(w)})$. 
It indicates that even if there is a better query algorithm, the room for improvement is limited when $w$ is large. 

\begin{theorem} \label{thm:intro_circuit_lb}
For any integer $w \in [2\log n, n]$, there exists a boolean function $f : \bitset^n \rightarrow \bitset$ computable by a DNF formula of width $w$ and size $ \lceil2^w / w\rceil $ such that 
    \[
        \avgq(f) = n \left( 1 - \frac{\log n}{\Theta(w)} \right).
    \] 
\end{theorem}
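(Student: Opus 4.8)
The plan is to realize $f$ as a disjunction of independent hard blocks, reducing the analysis to the fixed-weight results already in hand. Partition $[n]$ into $b=\lfloor n/w\rfloor$ blocks $B_1,\dots,B_b$ of $w$ coordinates each (a remainder of fewer than $w$ coordinates is absorbed by padding). On block $B_j$ place a function $g_j:\bitset^{w}\to\bitset$ of weight $\mu_j=\Theta(2^w/n)$, where the $\mu_j$ are chosen in $\{\lfloor 2^w/n\rfloor,\lceil 2^w/n\rceil\}$ so that $\sum_j\mu_j=\lceil 2^w/w\rceil$; each $g_j$ is taken to satisfy the conclusion of Theorem~\ref{thm:intro_bnm_lb} (with $w$ in place of $n$). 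Since $w\ge 2\log n$ forces $2^w/n\ge n\ge 4\log w$ and $2^w/n\le 2^{w-1}$, almost all weight-$\mu_j$ functions on $w$ bits qualify, so such $g_j$ exist and may be taken essentially identical. Define $f(x)=\bigvee_{j=1}^{b} g_j(x|_{B_j})$. In minimal DNF each term of $g_j$ is a full assignment to one block, i.e.\ a width-$w$ conjunction, so $f$ is a width-$w$ DNF of size $\sum_j\mu_j=\lceil 2^w/w\rceil$, matching the stated parameters.

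\textbf{Upper bound.} To bound $\avgq(f)$ from above I would scan the blocks left to right, evaluating each $g_j$ with its optimal zero-error decision tree and, optionally, halting with output $1$ the moment some block returns $1$ (early stopping only helps). As each block sees a uniform, independent input, $\avgq(f)\le\sum_{j}\avgq(g_j)=b\cdot\avgq(g_1)$. Applying Theorem~\ref{thm:intro_bnm_ub} to the $w$-variable function $g_1$ gives $\avgq(g_1)\le\log\frac{\mu_1}{\log w}+O(\log\log\frac{\mu_1}{\log w})=w-\log n-\log\log w+O(\log\log\frac{\mu_1}{\log w})$, so the per-block saving is $w-\avgq(g_1)\ge\log n-O(\log\log\frac{\mu_1}{\log w})$. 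Multiplying by $b=n/w$ yields $\avgq(f)\le n-\frac{n\log n}{w}+\frac nw\,O(\log\log\frac{\mu_1}{\log w})=n\bigl(1-\tfrac{\log n}{O(w)}\bigr)$.

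\textbf{Lower bound.} For the matching lower bound I would use $\avgq(f)\ge\mathbb{E}_x[\cert_x(f)]$, valid because the queries along each root-to-leaf path of a zero-error tree fix a certificate of the reached leaf. On a $0$-input every block rejects, and since the $g_j$ depend on disjoint coordinate sets, a $0$-certificate of $f$ restricts to a $0$-certificate of each $g_j$; hence $\cert_x(f)\ge\sum_j\cert_{x|_{B_j}}(g_j)\ge b\cdot\cmin(g_1)$ for every $x\in f^{-1}(0)$, where $\cmin(g_1)=\min_y\cert_y(g_1)$. Because $\Pr[f=0]=\prod_j(1-\mu_j/2^w)\ge 1-1/w\ge\tfrac12$, Theorem~\ref{thm:intro_bnm_lb}'s bound $\cmin(g_1)\ge\log\frac{\mu_1}{\log w}-O(\log\log\frac{\mu_1}{\log w})$ gives $\avgq(f)\ge\Pr[f=0]\cdot b\cdot\cmin(g_1)\ge n-O\bigl(\tfrac{n\log n}{w}\bigr)=n\bigl(1-\tfrac{\log n}{\Omega(w)}\bigr)$.

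\textbf{Main obstacle.} The two directions combine to $\avgq(f)=n(1-\frac{\log n}{\Theta(w)})$ provided the per-block saving $\log n$ dominates the per-block error $O(\log\log\frac{\mu_1}{\log w})$. Since $\log\log\frac{\mu_1}{\log w}=\log\log\frac{2^{w}}{n\log w}=O(\log w)=O(\log n)$, the error merely inflates the hidden constant on the lower-bound side, so that direction is robust. The delicate regime is the upper bound when $w=n^{\Omega(1)}$: there the error is a constant multiple of $\log n$, and I expect the main work to be verifying that the absolute constant in Theorem~\ref{thm:intro_bnm_ub} keeps the saving $\Omega(\log n)$ (or, equivalently, replacing the black-box per-block bound by a direct decision-tree analysis of the ``spread-out'' weight-$\mu_1$ function $g_1$ in that range). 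The extreme point $w=\Theta(n)$, where $b=\Theta(1)$, is precisely the weight-$2^n/n$ instance of Theorems~\ref{thm:intro_bnm_ub}–\ref{thm:intro_bnm_lb} and is the sharpest case to control.
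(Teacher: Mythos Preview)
Your construction and lower-bound argument are essentially the paper's: $f$ is the OR of $h=\lfloor n/w\rfloor$ disjoint copies of a hard $w$-variable function $g$ of weight $\Theta(2^w/n)$, and the lower bound follows because on a $0$-input any certificate for $f$ must certify each block separately, each costing at least $\cmin(g)\ge w-\log n-O(\log w)$ queries by Theorem~\ref{thm:intro_bnm_lb}. (The paper uses the slightly tighter $\Pr[f=0]\ge 1-1/w$ rather than $\tfrac12$, which you also note; the $\tfrac12$ alone would lose a factor of $2$ in the leading term.)

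The gap is exactly the one you flag under ``Main obstacle,'' and it cannot be closed by inspecting the constant in Theorem~\ref{thm:intro_bnm_ub}. The explicit bound proved there is $\avgq(g)\le\log\frac{\mu}{\log w}+\log\log\frac{\mu}{\log w}+87$; with $\mu=\Theta(2^w/n)$ this gives $\avgq(g)\le w-\log n+\log w-\log\log w+O(1)$, so the per-block saving $w-\avgq(g)$ is only $\log n-\log w+O(\log\log w)$. At $w=\Theta(n)$ this degenerates to $O(\log\log n)$, and the upper bound you obtain is merely $n-O(\log\log n)$, not the required $n-\Omega(\log n)$.

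The paper avoids this by discarding Theorem~\ref{thm:intro_bnm_ub} for the per-block upper bound and using the far cruder Lemma~\ref{lem:ub_naive} instead, which says $\avgq(g)\le\log\wt(g)+2$ for \emph{any} $g$. This immediately gives $\avgq(g)\le w-\log n+O(1)$ with no $\log w$ loss, and then $\avgq(f)\le h\cdot\avgq(g)\le n\bigl(1-\tfrac{\log n-2}{w}\bigr)$ in one line. Your parenthetical suggestion of ``a direct decision-tree analysis'' of $g$ is precisely this; the point is that the additional $\log\log w$ saving in Theorem~\ref{thm:intro_bnm_ub} over Lemma~\ref{lem:ub_naive} comes packaged with a matching second-order term of the same order, and for this application the naive bound is the correct tool.
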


Lastly, we define \emph{penalty shoot-out functions} in Appendix \ref{apdx:pso}, which are monotone balanced functions, such that the gap between $\detq(f)$ and $\avgq(f)$ is arbitrarily large.
Moreover, unlike the worst-case measures $\detq(f)$, $\randq(f)$, $\cert(f)$, $\bs(f)$, $\mathrm{s}(f)$, which are known to be polynomially related \cite{BD02, ABK16, Hua19, ABK+21}, no such polynomial relation holds between \emph{any} two of the average-case analogues\footnote{These average-case counterparts are defined in the uniform distribution. }
$\detq_{\mathrm{ave}}(f)$, $\randq_{\mathrm{ave}}(f)$, $\cert_{\mathrm{ave}}(f)$, $\bs_{\mathrm{ave}}(f)$, $\mathrm{s}_{\mathrm{ave}}(f)$, even for monotone balanced functions\footnote{Super-polynomial gaps can be demonstrated using the threshold function \cite{AD01}, the tribes function, and $\maj \circ \bfand$, all of which are monotone. An extra trick can make them balanced: given a monotone $f$, let $g = \maj(f,f^\dagger,z)$ for $z\in \bitset$, where $f^\dagger$ denotes $f$'s dual \cite{ODo14}. }.

\section{Preliminaries} \label{ch:prelim}

\subsection{Boolean functions}

Let $f : \bitset^n \rightarrow \bitset$ be a boolean function. The \textit{weight}, denoted by $\wt(f)$, is the number of inputs on which $f$ outputs 1.  Let $\bnm = \{f : \bitset^n\rightarrow \bitset \mid \wt(f) = m\}$ denote the set of all $n$-variable boolean functions with weight $m$. 

A \textit{restriction} $\rho : \{1, \ldots, n\} \rightarrow \{0,1,\star\}$ is a mapping fixing some variables to 0 or 1. We write $f|_\rho$ for the subfunction of $f$ obtained by fixing its input by $\rho$. Let \(\supp(\rho) = \rho^{-1}(\{0,1\})\) denote the support of the restriction $\rho$.
The weight of $x$, denoted by $|x|$, is the number of 1's in $x$. The bitwise negation of $x$ is denoted by $\neg x = (1 - x_1, \ldots, 1 - x_n)$. 

Let $F : \bitset^n \rightarrow \bitset$ and $G : \bitset^m \rightarrow \bitset$ be two boolean functions. 
Define the \textit{composition} $F \circ G$ by 
\[
    (F \circ G) (x) = F(G(x^{(1)}), \ldots, G(x^{(n)}))
\]
for $x = (x^{(1)}, \ldots, x^{(n)}) \in \bitset^{nm}$ and $x^{(i)} \in \bitset^m$. 

Define $x \preceq y$ if and only if $x_i \leq y_i$ for all $i \in \{1, 2, \ldots, n\}$. Say $f$ is \textit{monotone} if and only if $f(x) \leq f(y)$ for all inputs $x \preceq y$.

Given a boolean function $f:\{0, 1\}^n \to \{0, 1\}$, a certificate on input $x$ is a subset $S \subseteq \{1, \ldots, n\}$ such that $f(x) = f(y)$ for any input $y \in \{0, 1\}^n$ satisfying $x_i = y_i$ for all $i \in S$. The certificate complexity $\cert_x(f)$ on input $x$ is the size of the smallest certificate on input $x$. 

\subsection{Decision trees}

A (deterministic) decision tree $T$ is a binary tree. Each internal node is labeled by some integer $i \in \{1,2, \ldots, n\}$, and the edges and the leaves are labeled by 0 or 1. Repeatedly querying $x_i$ and following the edge labeled by $x_i$, the decision tree $T$ finally reaches a leaf and outputs the leaf's label, called the value $T(x)$ of $T$ on input $x$.  
The cost of deciding the value $T(x)$, denoted by $\cost(T, x)$, is the length of the root-to-leaf path which $T$ passes through. 
The depth of $T$ is the maximum cost $\max_{x \in \bitset^n} \cost(T,x)$. 
We say $T$ computes $f$ (with zero error) if $T(x) = f(x)$ for every $x \in \bitset^n$.
A query algorithm queries some variables and determines the value of the function; a query algorithm can be viewed as a family of decision trees. 

\subsection{Circuits and formulas}

A \emph{clause} is a logical OR of variables or their negations, and a \emph{term} is a logical AND of variables or their negations.
A \textit{conjunctive normal form (CNF)} formula is a logical AND of clauses, and a \textit{disjunctive normal form (DNF)} formula is a logical OR of terms. The \textit{size} of a CNF (respectively, DNF) formula is the number of the clauses (respectively, the terms).  The \textit{width} of a CNF (respectively, DNF) formula is the maximum variable number of the clauses (respectively, the terms). 

A \textit{circuit} $F$ is a directed acyclic graph with $n$ nodes of no incoming edge, called \textit{sources}, and a node of no outgoing edge, called \textit{sink}. Apart from the sources, the other nodes are called \textit{gates}. 
Each gate is labeled by AND, OR or NOT, and each AND (respectively, OR, NOT) node computes the logical AND (respectively, OR, NOT) of the values of its incoming nodes. The \textit{fan-in} of a gate is the number of its incoming edges, and the \textit{fan-out} of a gate is the number of its outgoing edges. The fan-in of NOT gates is fixed to $1$. The \textit{size} of $F$ is the number of the gates. The \textit{depth} is the length of the longest path between the sink and the sources. Obviously, every CNF or DNF formula can be represented as a circuit. An $\ac$ circuit is a circuit of polynomial size, constant depth and AND/OR gates with unbounded fan-in. A \emph{formula} is a circuit of gates with fan-out $1$.

\subsection{Query complexities}

Let $f:\{0,1\}^n \to \{0, 1\}$ be a boolean function.
The worst-case deterministic query complexity of a boolean function $f$, denoted by $\detq(f)$, is the minimum depth of decision trees that compute $f$. The average-case deterministic query complexity of a boolean function $f$, denoted by $\avgq(f)$, is the minimum average depth of zero-error deterministic decision trees that compute $f$ under a uniform input distribution.

\begin{definition}
The average-case deterministic query complexity of $f : \bitset^n \rightarrow \bitset$ under a uniform distribution is defined by 
\[
\avgq(f) = \min_{T} \mathop{\mathbb E}_{x \in \bitset^n} [\cost(T, x)],
\]
where $T$ is taken over all zero-error deterministic decision trees that compute $f$.
\end{definition}

$\avgq(f)$ turns out to equal the average-case zero-error \emph{randomized} query complexity, defined as  \(\min_{\mathcal{T}} \mathop{\mathbb E}_{x \in \bitset^n} [\cost(\mathcal{T}, x)],\) where $\mathcal{T}$ is taken over all zero-error \emph{randomized} decision trees that compute $f$ (see Remark 8.63 in \cite{ODo14}).

\section{Fixed-weight functions}

\subsection{Upper bound}

As a warm-up, the following proposition gives the exact value of $\avgq(f)$ for boolean functions $f$ with weight $1$, such as the AND function. 
For convenience, we say input $x$ is a \textit{black point} (with respect to $f$) if $f(x) = 1$.

\begin{proposition} \label{prop:dave-wt-1}
    $\avgq(f) = 2(1 - \frac{1}{2^n})$ for any $n$-variable boolean function $f$ with $\wt(f) = 1$.
\end{proposition}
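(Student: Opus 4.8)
The plan is to reduce the statement to the single function $\bfand$ and then pin down its average cost by a one-step recurrence. First I would observe that every weight-$1$ function is equivalent to $\bfand$ after negating a subset of input coordinates: if $x^\star \in \bitset^n$ is the unique black point of $f$, then the bijection $\sigma : x \mapsto x \oplus \neg x^\star$ sends $x^\star$ to $1^n$, so $f = \bfand \circ \sigma$. Relabeling, for each query node, the two outgoing edges according to the negation pattern of $x^\star$ gives a cost-preserving correspondence between decision trees for $\bfand$ and for $f$; since $\sigma$ is measure-preserving on $\bitset^n$, this yields $\avgq(f) = \avgq(\bfand)$. It therefore suffices to compute $\avgq(\bfand)$ for $n$-bit AND.

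Writing $D(n) := \avgq(\bfand)$ on $n$ variables, the core step is the recurrence
\[
    D(n) = 1 + \tfrac{1}{2}\, D(n-1), \qquad D(0) = 0.
\]
For the upper bound I would exhibit the greedy tree that queries the variables in a fixed order and halts at the first $0$: conditioning on the first queried bit, with probability $1/2$ it equals $0$ and the tree may output $0$ at no further cost, while with probability $1/2$ it equals $1$ and the residual function is $\bfand$ on the remaining $n-1$ bits, handled recursively by an optimal subtree. This gives $D(n) \le 1 + \frac{1}{2} D(n-1)$.

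The matching lower bound is the step I expect to require the most care. Any tree $T$ computing the nonconstant $\bfand$ queries some coordinate $x_i$ at its root, and every input pays $1$ for this query; restricting on $x_i$ then splits the cost into two branches. When $x_i = 0$ (probability $1/2$) the restricted function is constant $0$, so that subtree only ever contributes a \emph{nonnegative} average cost; when $x_i = 1$ (probability $1/2$) the restricted function is exactly $\bfand$ on the other $n-1$ variables, and since conditioning preserves uniformity on those bits, the subtree's conditional average cost is at least $D(n-1)$. Hence $\mathbb{E}_{x}[\cost(T,x)] \ge 1 + \frac{1}{2} D(n-1)$ for every such $T$, i.e.\ $D(n) \ge 1 + \frac{1}{2} D(n-1)$. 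The delicate point is to argue that the $x_i = 0$ branch cannot be leveraged to beat the bound---it can only add a nonnegative term---while the $x_i = 1$ branch genuinely inherits the full difficulty of a smaller AND.

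Finally I would solve the recurrence: unrolling $D(n) = 1 + \frac{1}{2} D(n-1)$ from $D(0) = 0$ produces the geometric sum $\sum_{j=0}^{n-1} 2^{-j} = 2\,(1 - 2^{-n})$, which one may equally verify by induction via $1 + \frac{1}{2}\cdot 2\,(1 - 2^{-(n-1)}) = 2\,(1 - 2^{-n})$. Combining with the reduction gives $\avgq(f) = 2(1 - \tfrac{1}{2^n})$ for every weight-$1$ function, as claimed.
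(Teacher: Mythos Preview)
Your proof is correct and follows essentially the same approach as the paper: both establish the one-step recurrence $\avgq(f) = 1 + \tfrac{1}{2}\,\avgq(f')$ on the number of variables and solve it by induction. The only cosmetic difference is that you first normalize to $\bfand$ via the bit-flip bijection, whereas the paper works directly with the unique black point $z$ and the branch $x_i = z_i$; your treatment also makes the upper/lower bound split more explicit than the paper's compressed equality chain.
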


\begin{proof}
    Let $f : \bitset^n \rightarrow \bitset$ be a boolean function with a unique black point $z \in \bitset^n$. 
    We prove $\avgq(f) = 2(1 - \frac{1}{2^n})$ by induction on $n$. 
    When $n = 1$, we have $\avgq(f) = 2(1 - \frac{1}{2^n}) = 1$. 
    
    Suppose an optimal query algorithm queries $x_i$ first. If $x_i \not= z_i$, the algorithm outputs $0$. Otherwise, the algorithm continues on the subfunction $f|_{x_i = z_i}$. Therefore,
    \begin{align*}
    \avgq(f)
    &= 1 + \Pr_{x \in \bitset^n} \left[ x_i \neq z_i\right] \cdot 0 + \Pr_{x \in \bitset^n} \left[x_i = z_i \right] \cdot \avgq(f|_{x_i = z_i}) \\
    &= 1 + \frac{1}{2} \cdot 2\left(1 - \frac{1}{2^{n-1}}\right) \\
    &= 2 \left(1 - \frac{1}{2^n}\right),
    \end{align*}
    where the second step is by the induction hypothesis.
\end{proof}

Next, we show a simple bound  $\avgq(f) \leq \log \wt(f) + O(1)$ for any $f$. 
Say a query algorithm is \textit{reasonable} if it terminates as soon as the subfunction becomes constant. 

\begin{lemma} \label{lem:ub_naive}
    $\avgq(f) \le \log \wt(f) + 2$ for any non-zero boolean function $f$.
\end{lemma}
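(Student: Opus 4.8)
The plan is to prove the bound by strong induction on the number of variables $n$, using the simplest possible strategy: query one relevant variable and then recurse with the optimal (reasonable) decision tree on the resulting subfunction, stopping as soon as a subfunction becomes constant. Crucially, I would induct on $n$ rather than on the weight $m = \wt(f)$, for reasons that become clear in the empty-branch case below.

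For the inductive step, if $f$ is constant then necessarily $f \equiv 1$ and $\wt(f) = 2^n$, so $\avgq(f) = 0 \le \log \wt(f) + 2$ trivially (this also settles the base case $n = 0$). Otherwise $f$ is non-constant and has a relevant variable $x_i$, which I query first. Writing $m_b = \wt(f|_{x_i = b})$, so that $m_0 + m_1 = m$, and noting that a constant branch contributes cost $0$ under the reasonable convention, the strategy gives
\[
    \avgq(f) \le 1 + \tfrac12 \avgq(f|_{x_i = 0}) + \tfrac12 \avgq(f|_{x_i = 1}),
\]
where both subfunctions live on $n-1$ variables, so the induction hypothesis applies to each branch of positive weight.

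The key step is then a two-case estimate. If one branch is empty, say $m_0 = 0$, then $f|_{x_i=0} \equiv 0$ costs nothing and the induction hypothesis on $f|_{x_i=1}$ (weight $m$, $n-1$ variables) yields $\avgq(f) \le 1 + \tfrac12(\log m + 2) = 2 + \tfrac12\log m \le \log m + 2$, using $\log m \ge 0$. If both branches are nonempty, i.e. $m_0, m_1 \ge 1$, the induction hypothesis gives $\avgq(f) \le 1 + \tfrac12(\log m_0 + 2) + \tfrac12(\log m_1 + 2) = 3 + \tfrac12\log(m_0 m_1)$, which is at most $\log m + 2$ exactly when $2\sqrt{m_0 m_1} \le m_0 + m_1 = m$ --- precisely the AM--GM inequality.

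The pleasant feature of this argument is that the inductive step closes for an \emph{arbitrary} relevant variable, so I never need to engineer a balanced split. The one genuinely delicate point --- and the reason the induction must run on $n$ and not on $m$ --- is the empty-branch case: querying a variable that is fixed across all black points leaves the weight unchanged while dropping the dimension, and it is exactly the factor $\tfrac12$ in front of the surviving branch that absorbs the cost of the extra query. Beyond isolating this case, everything reduces to the single inequality $2\sqrt{m_0 m_1} \le m_0 + m_1$, so I do not anticipate any real obstacle once the right induction variable is chosen.
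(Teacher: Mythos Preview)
Your proposal is correct and essentially identical to the paper's proof: both query a variable, split into the empty-branch and both-nonempty cases, and close the latter with AM--GM. The only cosmetic difference is that the paper phrases the induction as being on both $m$ and $n$ (with base cases $m=1$ and $n=1$), whereas you induct cleanly on $n$ alone---which, as you correctly observe, is all that is needed since the empty-branch case decreases $n$ but not $m$.
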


\begin{proof}
    Let $m = \wt(f)$. We prove by induction on $m$ and $n$. When $m = 1$, we have $\avgq(f) = 2(1-\frac{1}{2^n}) \leq 2$ by Proposition \ref{prop:dave-wt-1}. When $n = 1$, $\avgq(f) \le 1$. 
    
    Suppose $x_i$ is queried first. Let $m_0 = \wt(f|_{x_i = 0})$ and $m_1 = \wt(f|_{x_i = 1})$.
    If $m_b = 0$ for some $b \in \bitset$, a reasonable algorithm will stop on a constant subfunction $f|_{x_i = b}$. Thus, 
    \[
    \avgq(f) \le 1 + \frac{1}{2}(\log m + 2) \leq \log m + 2
    \]
    by the induction hypothesis. Otherwise, by the induction hypothesis and the AM-GM inequality, we have
    \begin{align*}
        \avgq(f) & \le 1 + \frac{1}{2}(\log m_0 + 2) + \frac{1}{2}  (\log m_1 + 2) \\
        & = \log (2\sqrt{m_0 m_1}) + 2 \\
        & \leq  \log m + 2.  
    \end{align*}
\end{proof}

We introduce concepts that will be used later.
Suppose $f:\{0,1\}^n \to \{0, 1\}$ has $m$ black points $x^{(1)}, \ldots, x^{(m)} \in \bitset^n$ in lexicographical order.
We call $c_i = (x_i^{(1)}, \ldots, x_i^{(m)})$ the \textit{column pattern} of coordinate $i$. Coordinates $i, j$ are \emph{positively (negatively) correlated} if $c_i = c_j$ ($c_i = \neg c_j$). 
An \emph{equivalent coordinate set} (ECS) is a set of correlated coordinates.

Say a coordinate set $S \subseteq \{1, \ldots, n\}$ is \emph{pure} if each $c_i$ for $i \in S$ is either all-zero or all-one; otherwise, $S$ is \emph{mixed}.
For example, in Table \ref{tb:black_points}, the set $\{5, 9, {11}\}$ is pure, since $c_5 = c_9 = (0, 0)$ and $c_{11} = (1,1)$; the set $\{1, 2, 3\}$ is mixed, since $c_1 = c_3 = (0, 1)$ and $c_2 = (1, 0)$.

\begin{table}[!htbp]
    \centering
    \begin{tabular}{| c || c c c c c c c c c c c|} 
         \hline
         black points & $x_1$ & $x_2$ & $x_3$ & $x_4$ &  $x_5$ & $x_6$ & $x_7$ & $x_8$ & $x_9$ & $x_{10}$ & $x_{11}$ \\
         \hline\hline
         $x^{(1)}$  & 0 & 1 & 0 & 1 & 0 & 1 & 0 & 0 & 0 & 0 & 1
         \\
         \hline
         $x^{(2)}$  & 1 & 0 & 1 & 0 & 0 & 1 & 0 & 1 & 0 & 1 & 1
         \\
         \hline
    \end{tabular}
    \caption{A 11-variable boolean function with weight $2$.}
    \label{tb:black_points}
\end{table}

\begin{proposition} \label{prop:ecs_fact} 
    If coordinates $i, j$ are positively (negatively) correlated, then for any $x \in \{0,1\}^n$ with $f(x) = 1$, we have $x_i = x_j$ ($x_i \ne x_j$).
\end{proposition}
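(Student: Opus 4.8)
The plan is to unwind the definitions directly; I expect essentially no obstacle beyond bookkeeping, since the proposition is really just a restatement of what it means for two column patterns to agree or to be complementary. The key observation is that the black points $x^{(1)}, \ldots, x^{(m)}$ exhaust the preimage $f^{-1}(1)$, so the statement's universal quantifier over all $x$ with $f(x) = 1$ is covered by the enumerated list. Concretely, any $x$ with $f(x) = 1$ must coincide with one of the enumerated black points, say $x = x^{(k)}$ for some $k \in \{1, \ldots, m\}$, and the column pattern $c_i = (x_i^{(1)}, \ldots, x_i^{(m)})$ merely records the value of coordinate $i$ at each of these points.

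First I would handle the positively correlated case. By definition $c_i = c_j$ means $x_i^{(\ell)} = x_j^{(\ell)}$ for every $\ell \in \{1, \ldots, m\}$; specializing to $\ell = k$ gives $x_i = x_i^{(k)} = x_j^{(k)} = x_j$, as claimed. Next, for the negatively correlated case, $c_i = \neg c_j$ means $x_i^{(\ell)} = 1 - x_j^{(\ell)}$ for every $\ell$; specializing to $\ell = k$ yields $x_i = 1 - x_j$, i.e.\ $x_i \ne x_j$. This settles both cases.

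The only point requiring a sentence of care is to note explicitly that the hypothesis $f(x) = 1$ forces $x$ into the set $\{x^{(1)}, \ldots, x^{(m)}\}$, rather than treating the listed black points and a generic input as two separate things. Since that list is exactly $f^{-1}(1)$, every $x$ with $f(x)=1$ is some $x^{(k)}$, and the argument above applies verbatim. I anticipate no genuine difficulty; the proposition follows immediately from the definitions of column pattern and of positive/negative correlation.
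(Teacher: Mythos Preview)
Your proposal is correct and is exactly the direct definition-unwinding the paper has in mind; indeed, the paper does not even spell out a proof, merely remarking that ``It is straightforward to prove Propositions~\ref{prop:ecs_fact} and~\ref{prop:mixed_ecs}.'' Your argument fills in precisely those details.
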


\begin{proposition} \label{prop:mixed_ecs} 
    Let $S$ be a mixed ECS. For any coordinate $i \in S$ and $v \in \bitset$, we have $\wt(f|_{x_i = v}) < \wt(f)$. 
\end{proposition}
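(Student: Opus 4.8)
The plan is to reduce the statement to a direct count of black points, using the single structural fact that every coordinate of a mixed ECS has a column pattern containing both a $0$ and a $1$. Recall that $\wt(f|_{x_i=v})$ equals the number of black points $x$ of $f$ with $x_i=v$, since each black point of the subfunction lifts to a unique black point of $f$ in the fiber $x_i=v$. The two fibers partition all $\wt(f)$ black points, so $\wt(f|_{x_i=0}) + \wt(f|_{x_i=1}) = \wt(f)$ always holds; the whole proposition then reduces to showing both summands are strictly positive.

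First I would argue that the column pattern $c_i$ of the given coordinate $i \in S$ contains at least one $0$ and at least one $1$. Since $S$ is mixed (not pure), there is some $i_0 \in S$ whose column $c_{i_0}$ is neither all-zero nor all-one. For any other $j \in S$, the ECS hypothesis makes $i_0$ and $j$ correlated, so $c_j = c_{i_0}$ or $c_j = \neg c_{i_0}$; since both $c_{i_0}$ and its negation contain both bit-values, $c_j$ does too. In particular $c_i$ has both a $0$ and a $1$.

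With $c_i$ thus mixed, there is at least one black point with $x_i=0$ and at least one with $x_i=1$, so $\wt(f|_{x_i=0}) \ge 1$ and $\wt(f|_{x_i=1}) \ge 1$. Combined with $\wt(f|_{x_i=0}) + \wt(f|_{x_i=1}) = \wt(f)$, each term is at most $\wt(f)-1 < \wt(f)$, for both $v \in \bitset$, which is the claim.

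This is a counting argument with no genuine obstacle; the only step needing care is the propagation in the second paragraph, verifying that \emph{mixed}—an existential statement about \emph{some} coordinate of $S$—forces \emph{every} coordinate of $S$, and in particular the arbitrary $i$ named in the statement, to have a mixed column. This is exactly where the pairwise-correlation hypothesis of an ECS is essential, and it is what distinguishes a mixed ECS from an arbitrary mixed set of coordinates (for which the conclusion can fail).
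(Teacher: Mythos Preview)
Your proposal is correct and is exactly the kind of direct counting argument the paper has in mind; the paper does not spell out a proof at all, merely remarking that it is ``straightforward to prove'' this proposition. Your careful propagation step---showing that the existential definition of \emph{mixed} forces every coordinate of the ECS to have a non-constant column via the pairwise correlation---is the one substantive point, and you handle it correctly.
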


\begin{proposition} \label{prop:id_exist}
Let $f : \bitset^n \rightarrow \bitset$ be a boolean function such that $n > k \cdot 2^{\wt(f)-1}$, then $f$ has an ECS of size at least $k + 1$.
\end{proposition}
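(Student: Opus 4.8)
The plan is to argue by a pigeonhole on column patterns. Set $m = \wt(f)$ and recall that each coordinate $i \in \{1, \ldots, n\}$ carries a column pattern $c_i \in \bitset^m$ recording the values of $x_i$ across the $m$ black points. Two coordinates lie in a common ECS precisely when they are correlated, i.e. when $c_i = c_j$ or $c_i = \neg c_j$, so I first want to check that this is an equivalence relation. Reflexivity and symmetry are immediate; for transitivity, if $c_i \in \{c_j, \neg c_j\}$ and $c_j \in \{c_k, \neg c_k\}$, then substituting shows $c_i \in \{c_k, \neg c_k\}$ in each of the four cases. Thus correlation partitions the coordinates into equivalence classes, and each maximal ECS is exactly one such class.

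Next I would count the classes. Each class is determined by the unordered pair $\{c_i, \neg c_i\}$ of its common pattern (up to complementation). Since a string and its bitwise complement are never equal, as they disagree in every position, the $2^m$ patterns in $\bitset^m$ split into exactly $2^{m-1}$ complementary pairs. Therefore the correlation relation has at most $2^{m-1}$ classes among the $n$ coordinates.

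The conclusion then follows from pigeonhole: distributing $n$ coordinates among at most $2^{m-1}$ classes, the hypothesis $n > k \cdot 2^{m-1} = k \cdot 2^{\wt(f) - 1}$ forces some class to receive strictly more than $k$ coordinates, hence at least $k + 1$. That class is an ECS of size at least $k + 1$, as required.

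I expect no genuine obstacle here; the only points demanding a moment of care are the transitivity check confirming that ``correlated'' really is an equivalence relation (so that the classes are well defined and pigeonhole applies cleanly), together with the observation that no pattern is self-complementary, which is what pins the class count at exactly $2^{m-1}$ rather than merely $2^m$ and thereby matches the exponent in the hypothesis.
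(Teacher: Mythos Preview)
Your proposal is correct and matches the paper's own argument exactly: the paper simply notes that Proposition~\ref{prop:id_exist} ``follows from the pigeonhole principle, since there are $2^{m-1}$ distinct equivalence classes with respect to correlation,'' and you have written out precisely this reasoning with the supporting details (transitivity of correlation, the pairing of patterns with their complements) made explicit.
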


It is straightforward to prove Propositions \ref{prop:ecs_fact} and \ref{prop:mixed_ecs}. 
Proposition \ref{prop:id_exist} follows from the pigeonhole principle, since there are $2^{m-1}$ distinct equivalence classes with respect to correlation. Using these facts, one can prove that any boolean function of weight $O(\log n)$ has constant $\avgq(f)$. 

\begin{lemma} \label{lem:ub_basic}
    Let $f : \bitset^n \rightarrow \bitset$, where $\wt(f) < \log n$. We have $\avgq(f) \le 5$. 
\end{lemma}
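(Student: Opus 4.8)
The plan is to prove Lemma~\ref{lem:ub_basic} by induction, using the existence of a large ECS (guaranteed by Proposition~\ref{prop:id_exist}) to show that after querying one coordinate inside such a set, the weight drops by at least one with probability~$1$ along at least one branch, or else the function collapses to a constant on one branch.

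\textbf{Setting up the induction.} Let $m = \wt(f)$. Since $m < \log n$, we have $m \le \log n - 1$, so Proposition~\ref{prop:id_exist} (applied with a suitable $k$) guarantees the existence of an ECS of size at least $2$. More precisely, if $n > 2^{m-1}$, then $f$ has an ECS of size at least~$2$; and since $m < \log n$ means $2^{m} < n$, hence $2^{m-1} < n/2 < n$, this hypothesis is satisfied (for $n$ large enough). Fix an ECS $S$ with $|S| \ge 2$ and pick any coordinate $i \in S$. The key dichotomy is whether $S$ is \emph{pure} or \emph{mixed}.

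\textbf{The mixed case.} If $S$ is mixed, then by Proposition~\ref{prop:mixed_ecs}, querying $x_i$ strictly decreases the weight on \emph{both} branches: $\wt(f|_{x_i = v}) < m$ for each $v \in \bitset$. So I would write
\[
\avgq(f) \le 1 + \tfrac{1}{2}\,\avgq(f|_{x_i=0}) + \tfrac{1}{2}\,\avgq(f|_{x_i=1}),
\]
and since each subfunction has weight at most $m - 1 < \log n \le \log(n-1) + 1$ (on $n-1$ variables), the induction hypothesis applies to each branch, giving a bounded recursion. The constant~$5$ should fall out once the base cases (small $m$, handled via Lemma~\ref{lem:ub_naive} or Proposition~\ref{prop:dave-wt-1}) are checked.

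\textbf{The pure case and the main obstacle.} If $S$ is pure, every black point agrees on all coordinates in $S$ (each $c_i$ is all-zero or all-one), so querying $x_i$ sends one branch to the all-zero subfunction (weight drops to $0$, algorithm terminates with cost contribution handled by the reasonable-algorithm convention) and preserves the weight $m$ on the other branch but on $n-1$ variables. The danger is that the weight need not decrease on the surviving branch, so a naive weight induction stalls. The fix is to induct primarily on $n$ (with $m$ fixed): on the surviving branch we have weight $m$ on $n - 1 \ge m \cdot 2^{m-1}$ variables, so the ECS structure persists and we keep peeling off pure coordinates, each costing expected $\tfrac{1}{2}$ query before the half-probability termination, which sums to a geometric $O(1)$. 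The main obstacle will be bookkeeping the interaction between the two inductions (on $m$ for mixed steps, on $n$ for pure steps) so that the accumulated expected cost stays below the target constant~$5$; I expect this to require carefully tracking that pure steps contribute a convergent geometric series while each mixed step triggers at most a $\log n < m$-bounded number of further reductions, and verifying the arithmetic closes at the claimed constant rather than merely at $O(1)$.
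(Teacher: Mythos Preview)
Your mixed-case recursion does not close at the constant $5$. You write $\avgq(f) \le 1 + \tfrac{1}{2}\avgq(f|_{x_i=0}) + \tfrac{1}{2}\avgq(f|_{x_i=1})$ and then appeal to the inductive hypothesis on each branch; but if each branch satisfies $\avgq \le 5$, this only yields $\avgq(f) \le 6$. More generally, a single mixed query reduces the weight by at least one but offers no early termination, so a run of mixed steps costs $\Theta(m)$ in expectation, not $O(1)$. Your proposal to interleave pure and mixed steps does not repair this: the pure ECS may be \emph{empty} (take for instance $m=2$ with black points $0^n$ and $1^n$, where every column pattern $c_i=(0,1)$ is mixed), forcing every step to be mixed. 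In that regime your recursion degenerates to $D(m)\le 1+D(m-1)$, i.e.\ $D(m)=m+O(1)=O(\log n)$ rather than a constant. The ``bookkeeping'' obstacle you flag is therefore not just bookkeeping --- it is a genuinely missing mechanism for getting early termination in the mixed case.

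The paper supplies exactly this missing idea: instead of querying one coordinate, it queries \emph{two} positively correlated coordinates $x_{i_1},x_{i_2}$ from a maximal ECS $I$ (which has size $\ge 3$ since $n>2^m$). If $x_{i_1}\ne x_{i_2}$ (probability $\tfrac12$), the input cannot be a black point, so output $0$ at cost $2$. If they agree, query a third coordinate $x_j$ from a \emph{different} maximal ECS $J$; since at most one maximal ECS can be pure, one of $I,J$ is mixed, so after fixing $x_{i_1},x_{i_2},x_j$ the weight strictly drops. This packs the $\tfrac12$-probability early termination \emph{and} the guaranteed weight decrease into a single three-query block, yielding the recursion
\[
\avgq(f)\ \le\ \tfrac12\cdot 2\ +\ \tfrac12\cdot(3+5)\ =\ 5,
\]
which closes exactly. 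Your single-coordinate scheme never achieves both properties simultaneously, and that is why the constant cannot ``fall out'' as you hope.
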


\begin{proof}
    Let $m = \wt(f) \ge 3$. (Lemma \ref{lem:ub_basic} follows directly from Lemma \ref{lem:ub_naive} if $\wt(f) < 3$.) We prove $\avgq(f) \leq 5$ by induction on $n$. 
    
    By Proposition \ref{prop:id_exist}, there exists a maximal ECS $I = \{i_1, \ldots, i_k\}$ of size $k \geq 3$, since $n > 2^{\wt(f)} = 2 \cdot 2^{\wt(f) - 1}$. Without loss of generality, assume that coordinates $i_1, \ldots, i_k$ are positively correlated. By Proposition \ref{prop:ecs_fact}, we have $x_{i_1} = \cdots = x_{i_k}$ for any black point $x \in \{0, 1\}^n$.
    
    If $|I| = n$, then any black point $x$ must satisfy $x_1 = \cdots = x_n$. Therefore, the only possible black points are the all-zero vector and the all-one vector, so there are at most 2 black points. By Lemma \ref{lem:ub_naive}, $\avgq(f) \le \log \wt(f) + 2 \le 3$.
    
    From now on, we assume $|I| < n$, and thus there exists a coordinate $j \not\in I$. Let $J$ be a maximal ECS that contains $j$. Notice that $I$ or $J$ is mixed, since at most one of $f$'s maximal ECSs can be pure.
    
    For notational convenience, let $\rho_{u, v}$ denote the restriction fixing $x_{i_1}$ and $x_{i_2}$ to $u$, $x_j$ to $v$, and leaving all other variables free.
    Our query algorithm $T$ is defined as follows:
    \begin{itemize}
        \item[(1)] query $x_{i_1}, x_{i_2}$;
        \item[(2)] output $0$ if $x_{i_1} \neq x_{i_2}$;
        \item[(3)] if $x_{i_1} = x_{i_2}$, then query $x_j$, and apply the query algorithm recursively on the subfunction. 
    \end{itemize}
    The query algorithm $T$ correctly computes $f$. Input $x$ cannot be a black point if $x_{i_1} \neq x_{i_2}$, since $i_1$ and $i_2$ are positively correlated (Proposition \ref{prop:ecs_fact}).
    
    For any $u, v \in \bitset$, the number of inputs of $f|_{\rho_{u,v}}$ is $n - 3$, and $\wt(f|_{\rho_{u,v}}) 
    \le m-1$ since $I$ or $J$ is mixed (Proposition \ref{prop:mixed_ecs}). 
    Observe that
    $n-3 > 2^m - 3 > 2^{m-1} \ge 2^{\wt(f|_{\rho_{u,v}})}$ for any $m \ge 3$.
    By the induction hypothesis, we have $\avgq(f|_{\rho_{u,v}}) \le 5$.
    
    Let us analyze the average cost of $T$.
    First, notice that the probability of querying exactly 2 variables is $\frac{1}{2}$; the probability of querying at least 3 variables is $\frac{1}{2}$. Thus, we conclude that 
    \begin{equation*}
        \avgq(f) \le 
        \mathop{\mathbb E}_{x} \left[ \cost(T,x) \right] 
        \le \frac{1}{2} \cdot 2 + \Pr_{x}\left[\cost(T, x) \ge 3 \right] \cdot  (3 + 5) = 5.
    \end{equation*}
\end{proof}

\begin{corollary} \label{cor:ub_small_weight}
    Let $f : \bitset^n \rightarrow \bitset$ be a boolean function. If $\wt(f) \le 4 \log n$, then $\avgq(f) \le 40$. 
\end{corollary}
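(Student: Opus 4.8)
The plan is to reuse the two‑phase template announced for Theorem~\ref{thm:intro_bnm_ub}, specialized to the narrow band $\log n \le \wt(f) \le 4\log n$. In \emph{phase one} the algorithm repeatedly queries an arbitrary still‑free variable and descends into the corresponding child, stopping at the first subfunction $g$ on $n'$ free variables that is either constant or satisfies $\wt(g) < \log n'$. In \emph{phase two} it runs the decision tree of Lemma~\ref{lem:ub_basic} on that subfunction, whose average cost is at most $5$ (and $0$ if $g$ is constant). The tree is correct because phase two is invoked precisely on a subfunction satisfying the hypothesis of Lemma~\ref{lem:ub_basic}, so it remains only to bound the expected cost.

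The quantitative engine is that the weight halves in expectation per query. Let $W_j$ be the weight of the subfunction reached after $j$ phase‑one queries, viewed as a random variable over a uniform input. Since fixing one more uniformly random bit sends a weight‑$w$ function to a child whose two possible weights sum to $w$, we get $\mathbb{E}[W_{j+1}\mid W_j]=W_j/2$, hence $\mathbb{E}[W_j]=\wt(f)/2^j\le 4\log n/2^j$. Let $\tau$ be the number of phase‑one queries; on the event $\{\tau>j\}$ we have $W_j\ge\log(n-j)$ and $W_j\ge 1$. For $j\le n/2$ and $n\ge 4$ we have $\log(n-j)\ge\log n-1\ge\tfrac12\log n$, so Markov's inequality gives $\Pr[\tau>j]\le \tfrac{4\log n/2^j}{(1/2)\log n}=8/2^j$, and summing this geometric series contributes at most $16$; the remaining tail $j>n/2$ is bounded by $\sum_{j>n/2}\mathbb{E}[W_j]\le\sum_{j>n/2}4\log n/2^j$, which is negligible. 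Thus $\mathbb{E}[\tau]=O(1)$, say at most $17$. Because the coordinates left free after phase one are still uniform, the expected phase‑two cost conditioned on any phase‑one outcome is at most $5$; by linearity $\avgq(f)\le\mathbb{E}[\tau]+5=O(1)$, comfortably below $40$.

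For the finitely many small $n$ where the estimate $\log(n-j)\ge\tfrac12\log n$ or the tail bound is too weak, I would instead invoke Lemma~\ref{lem:ub_naive} directly: when $\wt(f)\le 4\log n$ it yields $\avgq(f)\le\log\wt(f)+2\le\log(4\log n)+2$, which stays below $40$ for all $n$ up to a fixed (astronomically large) threshold, so a clean case split---Lemma~\ref{lem:ub_naive} below the threshold and the reduction argument above it---gives the uniform constant $40$. I expect the main obstacle to be the tail analysis rather than the algorithm: a \emph{fixed} number of reduction queries does \emph{not} suffice, since the rare event that the weight stays above $\log n'$ still carries cost $\Theta(\log\log n)$ under the crude Lemma~\ref{lem:ub_naive} estimate. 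It is therefore essential that the stopping be adaptive, that the expected number of queries be controlled by the full geometric decay coming from the martingale $2^jW_j$ (not a single Markov bound), and that the weight provably drop below threshold long before the number of free variables $n-j$ shrinks enough to spoil the approximation $\log(n-j)\approx\log n$.
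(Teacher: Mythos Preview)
Your argument is correct, but the paper takes a much shorter, non-adaptive route. Instead of descending until the weight drops, the paper simply partitions the on-set of $f$ into eight pieces of size at most $\lceil \tfrac{4\log n}{8}\rceil < \log n$, writes $f=f_1\vee\cdots\vee f_8$, and uses the elementary subadditivity $\avgq(g_1\vee\cdots\vee g_k)\le \sum_i \avgq(g_i)$ (evaluate the disjuncts in sequence) together with Lemma~\ref{lem:ub_basic} on each $f_i$. This yields $\avgq(f)\le 8\cdot 5=40$ in one line and explains where the constant $40$ comes from.

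Your approach previews the recursive scheme actually used for Theorem~\ref{thm:intro_bnm_ub} and, with the geometric tail $\Pr[\tau>j]\le 8/2^j$, would in fact give a smaller constant (roughly $22$); the cost is the martingale/Markov machinery and the small-$n$ case split. The paper's decomposition trick buys simplicity and avoids any adaptive stopping analysis, at the price of the slightly looser constant. Both proofs terminate in Lemma~\ref{lem:ub_basic}; the difference is purely in how one manufactures a subfunction of weight below $\log n'$---statically by splitting the support versus dynamically by querying.
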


\begin{proof}
    We have $\avgq(f_1(x) \vee \cdots \vee f_k(x)) \le \avgq(f_1) + \cdots + \avgq(f_k)$ for any $f_1, f_2, \ldots, f_k$. This is because we can query $f_1(x)$, $f_2(x)$, $\ldots$, $f_k(x)$ one by one and compute $f_1(x) \vee \cdots \vee f_k(x)$ afterward. The expected number of variables queried is at most the sum of the individual expectations.
    
    Let $k=8$ and $B_f$ denote $f$'s on-set (the set of inputs on which $f$ outputs $1$). 
    Partition $B_f$ into $k$ disjoint sets $B_{1} \ldots, B_{k}$, where $|B_i| \le \lceil \frac{4 \log n}{8} \rceil$. Each $B_i$ is the on-set of some function $f_i$. It can be verified that $f(x) = f_1(x) \vee \cdots \vee f_k(x)$ and that $\wt(f_i) = |B_{i}| \le \lceil \frac{4 \log n}{8} \rceil$.
    Note that $\wt(f_i) \le \lceil \frac{4 \log n}{8} \rceil < \frac{1}{2} \log n +1 \le \log n$ when $n \ge 4$. (When $n < 4$, the corollary holds clearly.) Thus, by Lemma \ref{lem:ub_basic}, $\avgq(f_i) \le 5$ for any $i$, implying that $\avgq(f) \le \sum_{i=1}^8 \avgq(f_i) \le 40$. 
\end{proof}

To prove Theorem \ref{thm:intro_bnm_ub}, we design a query algorithm and analyze its cost. Recall that in the proof of Lemma \ref{lem:ub_naive}, we considered \emph{any} reasonable query algorithm, which queries an arbitrary bit and terminates until the remaining function becomes constant.
Similarly, to prove Theorem \ref{thm:intro_bnm_ub}, we design a more sophisticated algorithm, which queries an arbitrary variable until the subfunction satisfies the following border condition: $\avgq(f) = O(1)$ if $\wt(f) \le  4\log n$ (Corollary \ref{cor:ub_small_weight}); then the query algorithm used in the proof of Corollary \ref{cor:ub_small_weight} is invoked. 

\begin{proof}[Proof of Theorem \ref{thm:intro_bnm_ub}]
    Let $m = \wt(f)$. If $m \le 4 \log n$, we have $\avgq(f) = O(1)$ by Corollary \ref{cor:ub_small_weight}. We will prove
    \begin{equation}
    \label{equ:ub_refined_goal}
    \avgq(f) \le \log \frac{m}{\log n} + \log \log \frac{m}{\log n} + 87,
    \end{equation}
    by induction on $n$ for any $f:\{0, 1\}^n \to \{0, 1\}$ with $\wt(f) = m \ge 4 \log n$.
    
    First, when $m \ge n$, the inequality \eqref{equ:ub_refined_goal} directly follows from Lemma \ref{lem:ub_naive} because $\log m + 2 \leq \log \frac{m}{\log n} + \log  \log \frac{m}{\log n} + 87$ if and only if $m \ge n^{2^{-85}} \log n$. 
    From now on, we assume that $m \le n$.
    
    Our query algorithm is as follows:
    \begin{itemize}
    \item[(1)] If $\wt(f) \ge n$, apply Lemma \ref{lem:ub_naive}, which implies \eqref{equ:ub_refined_goal} as we have shown.
    \item[(2)] Otherwise, query any 
    \[
    \ell  = \left\lceil \log \frac{m}{\log n} + \log \log \frac{m}{\log n} + 3 \right\rceil 
    \]
    variables, say,  $x_{i_1}, \ldots, x_{i_\ell}$.
    \item[(3)] Given the values of $x_{i_1}, x_{i_2}, \ldots, x_{i_\ell}$, say $x_{i_1} = c_1, \ldots, x_{i_\ell} = c_\ell$, apply our algorithm \emph{recursively} to the subfunction $f|_{x_{i_1} = c_1, \ldots, x_{i_\ell} = c_\ell}$. 
    \end{itemize}
        
    Let $\rho_1, \ldots, \rho_{2^\ell}$ enumerate all restrictions that fix $x_{i_1}, \ldots, x_{i_\ell}$, while leaving all remaining variables undetermined. 
    Averaging over all $\avgq(f|_{\rho_i})$, we have
    \begin{align}
         \avgq(f) \nonumber 
         \le \ell &+ \mathop{\mathbb E}_{i} [\avgq(f|_{\rho_i})] \nonumber  \\ 
        \le \ell &+ \Pr_{i}\left[\wt(f|_{\rho_i}) \le 4\log n\right] \cdot \mathop{\mathbb E}_{i}\left[ \avgq(f|_{\rho_i}) \mid \wt(f|_{\rho_i}) \le 4\log n \right] \nonumber  \\
        &+ \Pr_{i}\left[\wt(f|_{\rho_i}) > 4\log n\right] \cdot \mathop{\mathbb E}_{i}\left[ \avgq(f|_{\rho_i}) \mid \wt(f|_{\rho_i}) > 4\log n \right].  \label{eq:thm1-1}
    \end{align}
    We have $\wt(f) = \sum_{i = 1}^{2^\ell} \wt(f|_{\rho_i})$ and $\mathop{\mathbb E}_{i} [\wt(f|_{\rho_i})] = \frac{m}{2^\ell} \le \frac{1}{8} \cdot \frac{\log n}{\log \frac{m}{\log n}}$. 
    By Markov's inequality, 
    \begin{align*}
    \Pr_{i} \left[\wt(f|_{\rho_i}) > 4  \log n \right]  \leq \frac{\mathop{\mathbb E}_{i} [\wt(f|_{\rho_i})]}{4 \log n} \le \frac{1}{32} \cdot \frac{1}{\log\frac{m}{\log n}}.
    \end{align*}
    We bound $\avgq(f|_{\rho_i})$ based on the weight of $f|_{\rho_i}$.
    
    \textbf{Case 1: }$\wt(f|_{\rho_i}) \le 4\log n$.
    Since $m \le n$, we have $\ell \le \log n + \log \log n + 4$. So the number of variables in $f|_{\rho_i}$ is $n - \ell \ge n - (\log n + \log \log n + 4) \ge \sqrt{n}$ (when $n$ is large enough). Notice that  $\wt(f|_{\rho_i})\le 4 \log n \le 8\log (n-\ell)$. Thus, by Corollary \ref{cor:ub_small_weight}, we have $\avgq(f|_{\rho_i}) \le 80$.
     
    \textbf{Case 2: }$\wt(f|_{\rho_i}) > 4\log n$.
    Note that $\wt(f|_{\rho_i}) \ge 4 \log n > 4\log (n - \ell)$.
    By the induction hypothesis, we have 
    \begin{align*}
        \avgq(f|_{\rho_i}) & \le  \log \frac{m}{\log (n-\ell)} + \log \log \frac{m}{\log (n-\ell)} + 87 \\
         & \le  2\log \frac{m}{\log (n - \ell)} + 87 \\
        & \le 2\log \frac{m}{\log n} + 89,
    \end{align*}
    where the second step is because $\log \log \frac{m}{\log (n-\ell)} \le \log \frac{m}{\log (n-\ell)}$, and the third step is because $n - \ell \ge n - (\log n + \log \log n + 4)\ge \sqrt{n}$.
    
    Combining the two cases and plugging them into (\ref{eq:thm1-1}), we have
    \begin{align*}
         &\  \avgq(f) \\
        \le &\   \log \frac{m}{\log n} + \log \log \frac{m}{\log n} + 4 + 80 + \frac{1}{32} \cdot \frac{1}{\log \frac{m}{\log n}} \cdot { \left(2\log \frac{m}{\log n} + 89 \right)}  \nonumber  \\
        = &\  \log \frac{m}{\log n} + \log \log \frac{m}{\log n} + 84 + \frac{1}{16} + \frac{89}{32 \cdot \log \frac{m}{\log n} } \\
         \le &\  \log \frac{m}{\log n} + \log \log \frac{m}{\log n} + 87.
    \end{align*}
    
    To conclude, if $m \ge 4\log n$, the right-hand side of \eqref{equ:ub_refined_goal} is at most $\log \frac{m}{\log n} + O(\log \log \frac{m}{\log n})$, completing the proof of Theorem \ref{thm:intro_bnm_ub}.
\end{proof}

\subsection{Lower bound}

In this section, we prove Theorem \ref{thm:intro_bnm_lb}, showing that Theorem \ref{thm:intro_bnm_ub} is tight up to an additive logarithmic term.

To illustrate the idea of the proof, let us take the $\bfxor_n$ function as an example. 
Regardless of which variable is queried next, the black points are evenly partitioned, and the subfunction's weight is exactly halved.
Since $\bfxor_n$ has weight $2^{n-1}$ and the algorithm must continue until the subfunction becomes constant, it must query all $n$ variables for every input. Thus, $\avgq(\bfxor_n) = n$.
Similarly, the key idea of our proof is to show that most boolean functions exhibit a similar property: regardless of which variable is queried next, the black points are split into two roughly equal halves.
In other words, for almost all $f \in \bnm$, where $\bnm = \{f : \bitset^n\rightarrow \bitset \mid \wt(f) = m\}$, we shall prove $\wt(f|_P)$ is ``close'' to $2^{-k}m$ for \emph{any} tree path $P$ querying $k = \epsilon \log m$ variables. (Ignoring the output of $P$, we view a tree path $P$ as a restriction, with $f|_P$ representing the subfunction restricted to $P$.)

Now, we explain the proof strategy in more detail. 
To sample $f \in \bnm$ uniformly, a straightforward approach proceeds as follows: (1) randomly select $m$ distinct inputs $x^{(1)}, \cdots, x^{(m)} \in \bitset^n$; (2) set $f(x^{(i)}) = 1$ for all $i$; and (3) set the remaining inputs to $0$.
This can also be done by repeatedly drawing $m$ vectors from $\{0, 1\}^n$ without replacement and placing them into $m$ vectors $y^{(1)}, \ldots, y^{(m)} \in \{0,1\}^n$. However, to estimate the probability that $\wt(f|_P)$, where $f \in \bnm$, is close to $2^{-k} m$ for any length-$k$ tree path, we adopt a different sampling method.

Fix a tree path $P$, viewed as a restriction. Instead of sampling a random $f \in \bnm$ and then estimating $\wt(f|_P)$, we choose to sample $\wt(f|_P)$ directly, where $f \in \bnm$, using the following method.

Fix a tree path $P$ of length $k$, where
\begin{equation}
\label{equ:path_P_def}
P = x_{i_1} \xrightarrow[]{v_1} x_{i_2} \xrightarrow[]{v_2} \cdots  \xrightarrow[]{} x_{i_k} \xrightarrow[]{v_k} c
\end{equation}
and $c \in \bitset$ is the output of the path.
We denote the restriction $f|_{x_1 = v_1, \cdots, x_k = v_k}$ by $f|_P$. In $k$ rounds, for $j = 1, \ldots, k$, we sample without replacement from a box with $2^{n-j}$ 0's and $2^{n-j}$ 1's; place the numbers in the $i_j$-th position of each vector, and discard vectors with $(\neg v_j)$ at $i_j$-th position. 
(We can safely discard these vectors, because they are not counted in the weight of $f|_P$.) 
At the end of $k$ rounds, $\wt(f|_P)$ vectors remain.

Specifically, we sample $\wt(f|_P)$ as follows, given a fixed path $P$ defined in \eqref{equ:path_P_def}, where $f \in \bnm$ uniformly:
\begin{enumerate}
    \item[(1)] Let $y^{(1)}, \ldots, y^{(m)} \in \{0,1,\star\}^n$ be the $m$ vectors, where all elements are set to $\star$ initially.
    \item[(2)] In the first round, we sample $t_0 = m$ numbers without replacement from a box with $2^{n-1}$ zeros and $2^{n-1}$ ones. We then assign these numbers sequentially to the positions $y_{i_1}^{(1)}, \ldots, y_{i_1}^{(m)}$, that is, the $i_1$-th position of all the $m$ vectors. After that, discard the vectors with $(\neg v_1)$ at position $i_1$, that is, $y_{i_1}^{(p)} = \neg v_1$.
    Let $t_1$ be number of remaining vectors, where $t_1$ is a random variable equal to $\wt(f|_{x_{i_1}=v_1})$.
    \item[(3)] In the second round, we sample $t_1$ numbers without replacement from a box with $2^{n-2}$ zeros and $2^{n-2}$ ones, since $f|_{x_{i_1}=v_1, x_{i_2}=0}$ and $f|_{x_{i_1}=v_1, x_{i_2}=1}$ have $2^{n-2}$ inputs. Assign these numbers sequentially to the $i_2$-th position of the remaining $t_1$ vectors, and discard the vectors with $(\neg v_2)$ at position $i_2$, i.e., $y_{i_2}^{(p)} = \neg v_2$.
    Let $t_2$ be number of remaining vectors, where $t_2$ is a random variable equal to $\wt(f|_{x_{i_1} = v_1, x_{i_2} = v_2})$.
    \item[(4)] Proceed for $k$ rounds. The number of remaining vectors $t_k$ is a random variable equal to $\wt(f|_P) = \wt(f|_{x_{i_1} = v_1, \ldots, x_{i_k} = v_k})$. 
\end{enumerate}

Recall that $t_j$ is the number of vectors remaining after the $j$-th round, and $t_k = \wt(f|_P)$. If path $P$ correctly computes $f$ (on input $x \in \{0,1\}^n$ such that $x_{i_1} = v_1$, $\ldots$, $x_{i_k} = v_k$), then we must have $\wt(f|_P) = 0$ or $\wt(f|_P) = 2^{n - k}$.
Intuitively, in each round, $t_i \approx \frac{1}{2}t_{i-1}$ holds with high probability by Hoeffding's inequality, so it takes $\Omega(\log n)$ rounds to make $t_k = 0$. Thus, it is unlikely that a ``short'' path computes $f$. 

\begin{definition} [$\delta$-parity path] \label{def:delta_path}
Let $P = x_{i_1} \xrightarrow[]{v_1} x_{i_2} \xrightarrow[]{v_2} \cdots  \xrightarrow[]{} x_{i_k} \xrightarrow[]{v_k} c$ be a path of length $k$. Let $\rho_j$ denote the restriction that fixes $x_{i_p}$ to $v_p$ for $p = 1, \ldots, j$, leaving other variables undetermined. The path $P$ is called \emph{$\delta$-parity} with respect to $f : \bitset^n \rightarrow \bitset$ if 
    \[
    \frac{1}{2}(1 - \delta) \le \frac{\wt(f|_{\rho_j})}{\wt(f|_{\rho_{j-1}})} \le \frac{1}{2}(1 + \delta)
    \]
    for each $j = 1, \ldots, k$. 
\end{definition}

\begin{lemma} [Hoeffding's inequality\cite{Hoe63, Ser74}] 
    \label{lem:hoeffding_ineq} Let $X_1, X_2, \ldots, X_m$ be independent random variables such that $0 \le X_i \le 1$, and let $S_m = X_1 + X_2 + \ldots + X_m$. For any $t > 0$, we have
    \begin{equation}
    \label{equ:hoef}
    \Pr[|S_m - \mathbb{E}[S_m]| \ge t] \le 2 \exp \left(-\frac{2t^2}{m}\right).
    \end{equation}
    The inequality \eqref{equ:hoef} also holds when $X_1, \ldots, X_m$ are obtained by sampling without replacement.
\end{lemma}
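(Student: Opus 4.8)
The plan is to prove the with-replacement case first via the standard Chernoff--Cram\'er exponential-moment method, and then reduce the without-replacement case to it. Writing $Y_i = X_i - \mathbb{E}[X_i]$, so that each $Y_i$ has mean zero and lies in an interval of length $1$, I would begin from the one-sided tail. For any $\lambda > 0$, Markov's inequality applied to $e^{\lambda(S_m - \mathbb{E}[S_m])}$ gives
\[
\Pr[S_m - \mathbb{E}[S_m] \ge t] \le e^{-\lambda t}\, \mathbb{E}\!\left[e^{\lambda \sum_i Y_i}\right] = e^{-\lambda t} \prod_{i=1}^m \mathbb{E}\!\left[e^{\lambda Y_i}\right],
\]
where the factorization uses independence.

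The key ingredient is Hoeffding's lemma: if a random variable $Y$ satisfies $a \le Y \le b$ and $\mathbb{E}[Y] = 0$, then $\mathbb{E}[e^{\lambda Y}] \le \exp(\lambda^2 (b-a)^2 / 8)$. I would prove this by convexity: since $y \mapsto e^{\lambda y}$ is convex, $e^{\lambda y} \le \frac{b - y}{b - a} e^{\lambda a} + \frac{y - a}{b - a} e^{\lambda b}$ on $[a,b]$; taking expectations and using $\mathbb{E}[Y] = 0$ reduces the bound to an analytic inequality on a single function of $\lambda$, which one controls by a second-order Taylor expansion. With $a = -\mathbb{E}[X_i]$ and $b = 1 - \mathbb{E}[X_i]$ (so $b - a = 1$), this yields $\mathbb{E}[e^{\lambda Y_i}] \le e^{\lambda^2/8}$, hence $\Pr[S_m - \mathbb{E}[S_m] \ge t] \le \exp(-\lambda t + m\lambda^2/8)$. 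Optimizing over $\lambda$ (the minimizer is $\lambda = 4t/m$) gives $\exp(-2t^2/m)$; applying the same bound to $-Y_i$ and taking a union bound over the two tails produces the stated factor of $2$.

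For the final claim --- that \eqref{equ:hoef} survives under sampling without replacement --- I would invoke Hoeffding's reduction principle: if $S_m$ is a sum of $m$ samples drawn without replacement from a finite population and $S_m'$ is the corresponding sum of $m$ i.i.d.\ draws \emph{with} replacement from the same population, then $\mathbb{E}[\phi(S_m)] \le \mathbb{E}[\phi(S_m')]$ for every continuous convex $\phi$. Taking $\phi(x) = e^{\lambda x}$ shows that the moment-generating function in the display above is no larger in the without-replacement setting than in the independent setting, so every subsequent inequality in the Chernoff argument carries over verbatim, and the same final bound results.

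The main obstacle is precisely this without-replacement reduction, i.e.\ establishing $\mathbb{E}[\phi(S_m)] \le \mathbb{E}[\phi(S_m')]$ for convex $\phi$. The cleanest route I know expresses the without-replacement sum as an average, over random orderings, of conditional sums in a way that lets one peel off the dependence and apply Jensen's inequality; alternatively one appeals directly to Hoeffding's original argument or to Serfling's sharper treatment. Since the paper cites \cite{Hoe63, Ser74} for exactly this point, I would treat the reduction as a quotable black box and devote the written proof to the self-contained with-replacement Chernoff computation, noting only that convexity of $x \mapsto e^{\lambda x}$ is what transfers the bound to the sampling-without-replacement regime used in the lower-bound argument.
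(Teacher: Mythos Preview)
Your proposal is correct and follows the standard Chernoff--Cram\'er route to Hoeffding's inequality, together with the convex-order comparison for the without-replacement case. Note, however, that the paper does not prove this lemma at all: it is stated with citations to \cite{Hoe63, Ser74} and used as a black box in the proof of Lemma~\ref{lem:parity_path_2}. So there is nothing to compare against --- your write-up simply supplies a proof where the paper gives none, and what you outline is exactly the classical argument from the cited sources.
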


\begin{lemma} \label{lem:parity_path_2}
    Let $f:\{0, 1\}^n \to \{0,1\}$ be a boolean function with $\wt(f) = m$.
    Let $P$ be a decision tree path of length at most $\epsilon \log m$. For any $\delta \in (0, \frac{1}{2 \epsilon\log m}]$ and $\epsilon \in (0, 1)$, 
    \[
    \Pr_{f \sim \bnm} \left[ P \text{ is not } \delta\text{-parity for } f \right] < 2\epsilon\log m \cdot \exp\left( - \frac{1}{2} \cdot \delta^2 m^{1-\epsilon} \right).
    \]
\end{lemma}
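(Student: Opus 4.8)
The plan is to analyze the round-by-round sampling process described just before the statement, control each ratio $t_j/t_{j-1}$ using the without-replacement form of Hoeffding's inequality (Lemma \ref{lem:hoeffding_ineq}), and then combine the $k \le \epsilon\log m$ rounds by a union bound over a first-failure decomposition. For $j = 1, \ldots, k$ let $A_j$ be the event that round $j$ is balanced, i.e. $\frac12(1-\delta) \le t_j/t_{j-1} \le \frac12(1+\delta)$, where $t_j = \wt(f|_{\rho_j})$ is the number of vectors surviving round $j$. By Definition \ref{def:delta_path}, $P$ is $\delta$-parity for $f$ exactly when $A_1 \cap \cdots \cap A_k$ holds, so the goal is to bound $\Pr_{f\sim\bnm}[\bigcup_{j=1}^k \neg A_j]$.

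For the per-round estimate, I would condition on the value of $t_{j-1}$. Given $t_{j-1}$, the count $t_j$ is the number of the $t_{j-1}$ draws (taken without replacement from a box holding $2^{n-j}$ zeros and $2^{n-j}$ ones) that equal $v_j$; writing $t_j = \sum_{i=1}^{t_{j-1}} X_i$ with $X_i \in \{0,1\}$ the indicator of drawing $v_j$, each $X_i$ has mean $\frac12$ because the box is balanced, so $\mathbb{E}[t_j \mid t_{j-1}] = \frac{t_{j-1}}{2}$. The event $\neg A_j$ is precisely $\big|t_j - \frac{t_{j-1}}{2}\big| > \frac{\delta t_{j-1}}{2}$, and applying Lemma \ref{lem:hoeffding_ineq} with $t = \frac{\delta t_{j-1}}{2}$ gives, for every fixed value $t_{j-1} = s$,
\[
\Pr[\neg A_j \mid t_{j-1} = s] \le 2\exp\!\left(-\frac{2(\delta s/2)^2}{s}\right) = 2\exp\!\left(-\frac{\delta^2 s}{2}\right).
\]

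Next I would lower-bound $t_{j-1}$ on the conditioning event. On $A_1 \cap \cdots \cap A_{j-1}$ every earlier ratio is at least $\frac12(1-\delta)$, so $t_{j-1} \ge m\big(\frac{1-\delta}{2}\big)^{j-1} = \frac{m(1-\delta)^{j-1}}{2^{j-1}}$. Since $j \le k \le \epsilon\log m$, we have $2^{j-1} \le 2^{\epsilon\log m - 1} = \frac{m^\epsilon}{2}$; and by Bernoulli's inequality together with $\delta \le \frac{1}{2\epsilon\log m}$, $(1-\delta)^{j-1} \ge 1 - (j-1)\delta \ge 1 - \epsilon\log m \cdot \frac{1}{2\epsilon\log m} = \frac12$. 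Combining these two estimates yields $t_{j-1} \ge m^{1-\epsilon}$ on this event. Because the per-round bound is decreasing in $s$, averaging the displayed inequality over all values of $t_{j-1}$ consistent with $A_1 \cap \cdots \cap A_{j-1}$ gives $\Pr[\neg A_j \mid A_1 \cap \cdots \cap A_{j-1}] \le 2\exp\!\left(-\frac{\delta^2 m^{1-\epsilon}}{2}\right)$. The first-failure decomposition then finishes the proof:
\[
\Pr\!\Big[\textstyle\bigcup_{j=1}^k \neg A_j\Big] = \sum_{j=1}^k \Pr[\neg A_j \cap A_1 \cap \cdots \cap A_{j-1}] \le \sum_{j=1}^k \Pr[\neg A_j \mid A_1 \cap \cdots \cap A_{j-1}] \le 2k \exp\!\left(-\frac{\delta^2 m^{1-\epsilon}}{2}\right),
\]
which is at most $2\epsilon\log m \cdot \exp\!\left(-\frac12 \delta^2 m^{1-\epsilon}\right)$ since $k \le \epsilon\log m$.

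The delicate point, and the step I expect to require the most care, is the handling of dependence across rounds: the Hoeffding estimate for round $j$ is only strong enough when $t_{j-1}$ is large ($\ge m^{1-\epsilon}$), yet $t_{j-1}$ is itself random, and this lower bound is guaranteed only on the good event $A_1 \cap \cdots \cap A_{j-1}$. Conditioning on exactly that event (rather than on unconditional largeness of $t_{j-1}$) is what makes the argument valid, and it is essential that the per-round bound holds uniformly over every value of $t_{j-1}$ compatible with the conditioning so that the averaging step is legitimate. Matching the constants in the estimate $t_{j-1} \ge m^{1-\epsilon}$ to the target exponent $-\frac12\delta^2 m^{1-\epsilon}$ is precisely where the hypotheses $\delta \le \frac{1}{2\epsilon\log m}$ and $k \le \epsilon\log m$ are consumed.
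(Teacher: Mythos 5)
Your proposal is correct and follows essentially the same route as the paper's proof: a first-failure decomposition over the events $A_j$, the without-replacement Hoeffding bound applied per round, and a lower bound on $t_{j-1}$ extracted from the preceding good events, with the hypotheses $\delta \le \frac{1}{2\epsilon\log m}$ and $k \le \epsilon\log m$ consumed in exactly the same places. The only cosmetic difference is that the paper routes the conditioning through an auxiliary event $B_{j-1}$ (that $t_{j-1} \in [\alpha^{j-1}m, \beta^{j-1}m]$) and sums the per-round exponents $\alpha^{j-1}\delta^2 m$ before simplifying, whereas you bound $t_{j-1} \ge m^{1-\epsilon}$ uniformly up front via Bernoulli's inequality; the two computations are equivalent.
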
 

\begin{proof}
    Let $b \le \epsilon\log m$ denote the length of $P$. 
    The random variable $t_j$ equals $\wt(f|_{\rho_j})$, where $f$ is sampled uniformly at random from $\bnm$.
    Let $X_{j,1}$, $\ldots$, $X_{j, t_{j-1}} \in \{0,1\}$ be random variables indicating whether each of the $t_{j-1}$ vectors is in the on-set of $f|_{\rho_j}$, i.e., whether it remains after the $j$-th round.  Random variables $X_{j,1}, \ldots, X_{j, t_{j-1}} \in \{0,1\}$ are obtained from $2^{n-j}$ zeros and $2^{n-j}$ ones by sampling without replacement.
    We have $t_j = X_{j, 1} + \cdots + X_{j, t_{j-1}}$ and $\mathbb E[t_j] = \frac{1}{2}t_{j-1}$.
    
    Let  $\alpha = \frac{1}{2}(1-\delta)$ and $\beta = \frac{1}{2}(1+\delta)$. We have
    \begin{align}
         &\ \Pr_{ f \sim \bnm} \left[ P \text{ is not } \delta\text{-parity with respect to } f \right] \nonumber\\ 
        =&\  1 - \Pr \left[ \bigwedge_{j = 1}^{b} t_j \in \left[ \alpha t_{j-1}, \beta t_{j-1} \right] \right] \nonumber\\
        =&\ \Pr \left[ \exists 1 \leq j \leq b \,\text{ s.t. }\, t_j \not\in \left[ \alpha t_{j-1}, \beta t_{j-1} \right] \wedge \left( \bigwedge_{k = 1}^{j-1}  t_k \in \left[ \alpha t_{k-1}, \beta t_{k-1} \right]\right)\right]. \label{equ:delta_parity_prob}
    \end{align}
    Let $A_j$ be the event that $t_j \in \left[ \alpha t_{j-1}, \beta t_{j-1} \right]$, and let $B_j$ be the event $t_{j} \in \left[ \alpha^{j} m, \beta^{j} m \right]$. By a union bound, \eqref{equ:delta_parity_prob} is at most
    \begin{equation} \label{eq:lb_lem_3}
        \sum_{j = 1}^b \Pr \left[ \neg A_j \wedge \left( \bigwedge_{k = 1}^{j-1} A_k \right)\right] 
        \leq \sum_{j = 1}^b \Pr \left[ \neg A_j \wedge B_{j-1} \right]. 
    \end{equation}
    
    If event $A_j$ does not occur, we have $t_j > \beta t_{j-1}  = \frac{1}{2}(1 + \delta) t_{j-1}$ or $ t_j < \alpha t_{j-1} = \frac{1}{2}(1 - \delta) t_{j-1}$, which implies $|t_j - \mathbb E[t_j]| > \frac{1}{2}\delta t_{j-1}$.
    By Hoeffding’s inequality (Lemma \ref{lem:hoeffding_ineq}), 
    \begin{equation}
    \label{equ:pb_Bj_ub}
            \Pr \left[ \neg A_j \wedge B_{j-1} \right]  \leq 2 \exp \left( -\frac{1}{2} \delta^2 t_{j-1} \right) \leq 2 \exp \left( -\frac{1}{2} \alpha^{j-1} \delta^2 m \right),
    \end{equation}
    since $ t_{j-1} \geq \alpha^{j-1} m$ by $B_{j-1}$. 
    
    Finally, plugging \eqref{equ:pb_Bj_ub} into \eqref{eq:lb_lem_3}, we have
    \begin{align*}
        &\ \Pr_{ f \sim \bnm} \left[ P \text{ is not } \delta\text{-parity with respect to } f \right]& &\\ 
        \leq&\ \sum_{j = 1}^b 2 \exp \left( -\frac{1}{2} \cdot \alpha^{j-1} \delta^2 m \right) & & \\
         \leq&\  2b \cdot \exp \left( -\frac{1}{2} \cdot \alpha^{b-1} \delta^2 m \right) & & \left( \alpha = \frac{1}{2}(1 - \delta) < 1 \right) \\ 
         \leq&\  2b\cdot \exp \left( -\frac{1}{2}\cdot \frac{1}{2^{b-1}} \left(1 - \frac{1}{2\epsilon \log m}\right)^{\epsilon \log m - 1} \delta^2 m \right) & & \left(\delta \leq \frac{1}{2 \epsilon \log m} \right) \\
         \leq&\ 2b \cdot  \exp \left( - \frac{1}{2} \delta^2 \cdot \frac{m}{2^b} \right) & & \left(   \left(1 - \frac{1}{2x}\right)^{x-1}  > \frac{1}{2}  \right) \\
         \leq&\  2\epsilon\log m \cdot  \exp \left( -\frac{1}{2} \cdot   \delta^2 m^{1- \epsilon} \right). & & \Big( b \le \epsilon\log m \Big)
\end{align*}
\end{proof}

\begin{definition} [$(t, \delta)$-parity function] Let $f : \bitset^n \rightarrow \bitset$ be a boolean function. The function $f$ is called \emph{$(t,\delta)$-parity} if any path of length at most $t$ is a $\delta$-parity path with respect to $f$. 
\end{definition}

\begin{lemma}  \label{lem:parity_path_1}
    Let $f : \bitset^n \rightarrow \bitset$ be a $(t,\delta)$-parity function with $\wt(f) \le 2^{n-1}$ satisfying $1 \le t \le \log \wt(f) - 1$  and $\delta \le \frac{1}{2 t} $. Then, $\min_{x\in \bitset^n} \cert_x(f) \ge t$. 
\end{lemma}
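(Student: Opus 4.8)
The plan is to argue by contradiction against the existence of a small certificate. Suppose $\min_{x} \cert_x(f) < t$; then some input $x$ admits a certificate $S \subseteq \{1,\ldots,n\}$ of size $s \le t-1$. Fixing the coordinates of $S$ to the values $x|_S$ produces a restriction $\rho$ with $|\supp(\rho)| = s$ for which $f|_\rho$ is constant, equal to $f(x)$. I would then list the coordinates of $S$ in an arbitrary order so as to view $\rho$ as a decision-tree path $P$ of length $s$. Since $s \le t-1 \le t$, the hypothesis that $f$ is $(t,\delta)$-parity applies to $P$: writing $\rho_0,\rho_1,\ldots,\rho_s=\rho$ for the prefixes of $P$ (with $\rho_0$ the empty restriction, so $\wt(f|_{\rho_0})=m$), every one-step ratio $\wt(f|_{\rho_j})/\wt(f|_{\rho_{j-1}})$ lies in $\left[\tfrac12(1-\delta),\,\tfrac12(1+\delta)\right]$.

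Multiplying these $s$ ratios telescopes to the sandwich
\[
m\Bigl(\tfrac12(1-\delta)\Bigr)^{s} \;\le\; \wt(f|_P) \;\le\; m\Bigl(\tfrac12(1+\delta)\Bigr)^{s}.
\]
The product is legitimate because $\delta$-parity already forces each intermediate weight to be positive (the lower ratio bound $\tfrac12(1-\delta)>0$ requires positive denominators). I expect the condition $t \le \log \wt(f) - 1$ to enter only here, keeping the setup consistent—guaranteeing that a length-$\le t$ halving path does not exhaust the weight, so that such $(t,\delta)$-parity functions are non-vacuous and the left-hand bound stays safely positive—rather than in the contradiction itself.

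Next I would split on the constant value of $f|_P$. If $f|_P \equiv 0$, then $\wt(f|_P)=0$, contradicting the left-hand bound $m\bigl(\tfrac12(1-\delta)\bigr)^{s} > 0$ (valid since $m \ge 1$ and $\delta<1$). If $f|_P \equiv 1$, then all $2^{\,n-s}$ inputs of the subfunction are black points, so $\wt(f|_P)=2^{\,n-s}$; combined with the right-hand bound this forces $2^{\,n-s} \le m(1+\delta)^{s}/2^{s}$, i.e.\ $2^{\,n} \le m(1+\delta)^{s}$. This is where the remaining hypotheses do the work: from $\wt(f)=m \le 2^{\,n-1}$ together with $\delta \le \tfrac1{2t}$, $s \le t-1$, and the elementary estimate $\bigl(1+\tfrac1{2k}\bigr)^{k} < \sqrt{e}$, one gets $(1+\delta)^{s} \le \bigl(1+\tfrac1{2t}\bigr)^{t} < \sqrt e < 2$, whence $m(1+\delta)^{s} < 2^{\,n-1}\cdot 2 = 2^{\,n}$, a contradiction. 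Either branch yields a contradiction, so no certificate of size $<t$ exists and $\min_x \cert_x(f) \ge t$.

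The argument is elementary, so the only real obstacle is framing the right contradiction target: recognizing that a certificate is precisely a restriction making $f$ constant, and that the two constant values pin $\wt(f|_P)$ to $0$ and to $2^{\,n-s}$, against which the telescoped lower and upper bounds respectively fire. The one quantitative point that genuinely must be verified is the constant-$1$ case, where both the budget $\wt(f)\le 2^{\,n-1}$ and the slack $\delta \le \tfrac1{2t}$ are needed to keep $m(1+\delta)^{s}$ strictly below $2^{\,n}$; in particular the bound $\bigl(1+\tfrac1{2t}\bigr)^{t}<2$ is exactly what makes the choice $\delta \le \tfrac1{2t}$ sufficient.
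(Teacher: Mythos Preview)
Your proof is correct and follows essentially the same approach as the paper: both telescope the $\delta$-parity ratios to sandwich $\wt(f|_\rho)$ for any restriction $\rho$ of size at most $t$, then use $\wt(f)\le 2^{n-1}$ together with $(1+\tfrac{1}{2t})^t<\sqrt{e}<2$ on the upper side and strict positivity on the lower side to rule out $f|_\rho$ being constant. The only cosmetic difference is that the paper argues directly (showing $0<\wt(f|_\rho)<2^{n-j}$ for every $j\le t$), while you phrase it as a contradiction starting from a hypothetical small certificate; your observation that the hypothesis $t\le\log\wt(f)-1$ is not actually needed for the contradiction (only $\wt(f)>0$) is accurate---the paper invokes it to show $\wt(f|_\rho)\ge 1$ explicitly, but positivity alone suffices.
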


\begin{proof}
    Let  $P = x_{i_1} \xrightarrow[]{v_1} x_{i_2} \xrightarrow[]{v_2} \cdots  \xrightarrow[]{} x_{i_k} \xrightarrow[]{v_k} c$ be any decision tree path of length $k \le t$. By Definition \ref{def:delta_path}, we have
        \begin{equation} \label{eq:parity_weight_ineq}
            \frac{1}{2^j}(1 - \delta)^j \le
            \frac{\wt(f|_{\rho_{j})}}{\wt(f)} = 
            \frac{\wt(f|_{\rho_1})}{\wt(f)} \cdot  \frac{\wt(f|_{\rho_2})}{\wt(f|_{\rho_1})} \cdots  \frac{\wt(f|_{\rho_{j}})}{\wt(f|_{\rho_{j-1}})}
            \le 
            \frac{1}{2^j}(1 + \delta)^j
        \end{equation}
        for $j = 1,2,\ldots, k$, where $\rho_j$ is the restriction that fixes $x_{i_p}$ to $v_p$ for $p = 1, 2, \ldots, j$.
        Thus, we have $\frac{1}{2^j}(1 - \delta)^j \wt(f) \le \wt(f|_{\rho_{j}}) \le \frac{1}{2^j}(1 + \delta)^j \wt(f)$. 
    Note that \eqref{eq:parity_weight_ineq} holds for any decision tree path of length $j$. So we have 
    \begin{equation}
    \label{equ:wtfr_lb_up}
    \frac{1}{2^j}(1 - \delta)^j \wt(f) \le \wt(f|_{\rho}) \le \frac{1}{2^j}(1 + \delta)^j \wt(f)
    \end{equation}
    for any restriction $\rho$ fixing $j$ variables, where $j \le t$.
        
    On one hand, from \eqref{equ:wtfr_lb_up}, we have
    \begin{align*}
    \wt(f|_\rho) 
    &\le (1 + \delta)^t 2^{n - j - 1} & & \Big( \wt(f) \le 2^{n-1} \Big) \\
    &\le \frac{1}{2}  \left(1 + \frac{1}{2t}\right)^t 2^{n - j} & & \left(  \delta \le \frac{1}{2t} \right) \\
    &\le \frac{\sqrt{e}}{2} \cdot 2^{n - j} \le 0.8244 \cdot 2^{n - j}. & &  \left( \left(1 + \frac{1}{2n}\right)^n \le \sqrt{e} \right)
    \end{align*}
    On the other hand, by \eqref{equ:wtfr_lb_up}, we have
    \begin{align*}
    \wt(f|_\rho) 
    &\ge  (1 - \delta)^t 2^{t - j + 1}   & & \Big(t \le \log \wt(f) - 1 \Big) \\
    &\ge 2\left(1 - \frac{1}{2t}\right)^t   & & \left(\delta \le \frac{1}{2t} \text{ and } j \le t \right) \\
    &\ge 1. & & \left( \left(1 - \frac{1}{2x}\right)^x \ge \frac{1}{2} \text{ when } x \ge 1 \right)
    \end{align*}
    Thus, $\wt(f|_\rho)$ is less than $2^{n - j}$ and  larger than $0$, which implies $f|_\rho$ cannot be constant for any restriction $\rho$ fixing at most $t$ variables. Therefore, we conclude $\min_{x\in \bitset^n} \cert_x(f) \ge t$.
\end{proof}

Finally, one can prove Theorem \ref{thm:intro_bnm_lb} using Lemmas \ref{lem:parity_path_2} and \ref{lem:parity_path_1}.

\begin{proof} [Proof of Theorem \ref{thm:intro_bnm_lb}]
    Let  $m = m(n)$ and
    \[
    \epsilon = 1 - \frac{1}{\log m} \left(\log \log n + 3\log \log \frac{m}{\log n}+ 5\right).
    \]
    Since $\cost(T,x) \ge \cert_x(f)$ for any $x \in \bitset^n$, $\avgq(f) \ge \min_{x \in \bitset^n} \cert_x(f)$, 
    Our goal is to prove
    \[
    \Pr_{{f} \sim \bnm} \left[\min_{x \in \bitset^n} \cert_x(f) < \epsilon \log m = \log \frac{m}{\log n} - 3 \log \log \frac{m}{\log n} - 5 \right] \to 0
    \]
    as $n \rightarrow \infty$. Since $4\log n \le m \le 2^{n-1}$, $m,n$ tend to infinity simultaneously.
    
    Let $t = \epsilon\log m$ and $\delta = (2\log \frac{m}{\log n})^{-1} \le (2\epsilon \log m)^{-1}  = \frac{1}{2t}$. 
    Let $\mathrm{len}(P)$ denote the length of a tree path $P$.
    By Lemma \ref{lem:parity_path_1}, 
    if  $\min_{x \in \bitset^n} \cert_x(f)  < t$, then $f$ is not $(b,\delta)$-parity. That is, there exists a path $P$ being not $\delta$-parity. 
    Thus,
    \begin{align}
        &\  \Pr_{{f} \sim \bnm} \left[\min_{x \in \bitset^n} \cert_x(f) < \epsilon \log m\right] \nonumber \\
        \le&\ \Pr_{{f} \sim \bnm} \left[ \exists \text{ tree path } P \,\text{ with }\, \mathrm{len}(P) < t \text{ such that } P \text{ is not } \delta\text{-parity}\right]. \label{equ:lb_goal}
    \end{align}
    By a union bound, \eqref{equ:lb_goal} is at most
    \begin{align}
        &\ \sum_{\mathrm{len}(P) <  t} \Pr_{ f \sim \bnm} \left[P \text{ is not } \delta\text{-parity for } f \right] &  \nonumber \\
        \leq&\ \sum_{k = 0}^{\epsilon \log m - 1} {n \choose k} k!2^k \cdot  \exp\left( - \frac{1}{2} \delta^2 m^{1-\epsilon} \right)  & &  (\text{Lemma \ref{lem:parity_path_2}}) \nonumber \\
        \leq&\ n^{\epsilon \log m} (\epsilon \log m)^{2\epsilon \log m} \cdot (2\epsilon\log m) \cdot\exp\left( - \frac{1}{2} \delta^2 m^{1-\epsilon} \right) & &  \nonumber \\
        \leq&\ \exp\left(\ln n \cdot 4\epsilon \log m\right) \cdot  \exp\left( - \frac{m^{1- \epsilon}}{8 (\log \frac{m}{\log n})^2} \right). &  \label{eq:bnm_lb_prob}
    \end{align}
    Then, since $\epsilon \log m \le \log \frac{m}{\log n}$, \eqref{eq:bnm_lb_prob} is at most
    \begin{align*}
        &\ \exp\left( \log n \cdot (4\ln 2) \cdot \log \frac{m}{\log n} - \frac{m^{1- \epsilon}}{8(\log \frac{m}{\log n})^2} \right) & \\
        =&\ \exp\left( - 4(1 - \ln 2)\log \frac{m}{\log n} \cdot \log n \right) &  \\
        \leq&\ \exp\left( -  8 (1 - \ln 2) \log n \right), & \left( m \ge 4 \log n \right)
    \end{align*}
    which tends to zero as $n \rightarrow \infty$.
    Therefore, we conclude
    \[
        \avgq(f) \ge \min_{x \in \bitset^n} \cert_x(f) \geq \epsilon \log m = \log \frac{m}{\log n} - 3 \log \log \frac{m}{\log n} - 5
    \]
    for almost all functions $f\in \bnm$. That is, $\avgq(f)$ is at least $ \log \frac{m}{\log n} - O(\log \log \frac{m}{\log n})$, since $m \ge 4\log n$.
\end{proof}

\section{DNFs, circuits, and formulas} \label{ch:circuits}

In this section, we study $\avgq(f)$ of circuits that consist of AND, OR, NOT gates with unbounded fan-in.

As a warm-up, we show $\avgq(F) = O(s)$ for general size-$s$ circuits $F$. The bound is tight up to a multiplicative factor, since $\avgq(\bfxor_n) = n$, and $\bfxor_n$ is computable by a circuit of size $O(n)$ and depth $O(\log n)$.

\begin{proposition} \label{lem:cir_general_ub}
    $\avgq(F) \leq 2s$ for every circuit $F$ of size $s$.
\end{proposition}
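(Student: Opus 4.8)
The plan is to exhibit a single deterministic decision tree---the natural \emph{lazy} short-circuit evaluation of the circuit---and bound its expected cost by $2s$. I would evaluate $F$ recursively from the sink: to evaluate an $\bfor$ gate, evaluate its inputs one by one in a fixed order and stop (returning $1$) as soon as an input evaluates to $1$; dually, for an $\bfand$ gate stop (returning $0$) at the first input that is $0$; a $\mathrm{NOT}$ gate simply negates the value of its single input. Whenever a source variable is read for the first time we query it and cache its value, so each variable is queried at most once and $\cost(T,x)$ equals the number of distinct variables queried on input $x$. This procedure computes $F$ with zero error, hence $\avgq(F) \le \mathbb{E}_{x \in \bitset^n}[\#\{\text{variables queried}\}]$.

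To bound the expected number of queries I would use a charging argument: charge each query to the \emph{gate that first reads that variable} as a direct source-input during the execution. Every variable that is ever queried is first read at a well-defined gate, and it is charged there exactly once, so it suffices to show that each gate is charged at most $2$ queries in expectation; summing over the $s$ gates then gives $\avgq(F) \le 2s$.

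The heart of the matter is this per-gate bound, and the factor $2$ comes from a geometric series. Consider an $\bfor$ gate $g$ and the subsequence of its inputs that are source variables read \emph{freshly} by $g$ (not already cached). The key observation is that a freshly read source is a uniform bit independent of all previously queried variables; for an $\bfor$ gate a fresh $1$ triggers the short-circuit, so $g$ reaches and freshly reads its $\ell$-th such source only if the previous $\ell-1$ fresh sources all returned $0$ (other intervening inputs being $0$ only lowers the probability). Hence the $\ell$-th fresh source-read occurs with probability at most $2^{-(\ell-1)}$, and the expected number of queries charged to $g$ is at most $\sum_{\ell \ge 1} 2^{-(\ell-1)} = 2$. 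The $\bfand$ case is identical with the roles of $0$ and $1$ swapped, and a $\mathrm{NOT}$ gate is charged at most one query; the example $\bfxor_n$, with $\avgq(\bfxor_n)=n$ and circuit size $\Theta(n)$, confirms that the constant cannot be pushed below a fixed bound.

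I expect the main obstacle to be handling \emph{sharing} (fan-out greater than $1$): when one variable feeds many gates, a naive recursive cost analysis would have to condition on the values of already-evaluated subcircuits, and this conditioning can \emph{increase} the expected cost of the remaining evaluation, breaking a clean induction on subcircuit size (which would in any case double-count shared gates, so that a per-gate bound times $2$ need not sum to $2s$). The charging scheme is designed precisely to sidestep this difficulty: by charging each query to the first gate that reads the variable and exploiting that every \emph{first} read is an independent fair coin, the per-gate analysis never needs to reason about conditional distributions, and recomputation of a shared gate only re-reads already-cached variables and so triggers no new queries. The only remaining care is the degenerate base cases---a constant needs no query and a lone variable (no gates) needs one query---which I would dispatch separately, noting that the bound is meaningful for circuits containing at least one gate.
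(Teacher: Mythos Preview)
Your approach is correct and coincides with the paper's: the paper's proof is the one-liner ``the average cost of evaluating each AND/OR/NOT gate does not exceed $2$ (Proposition~\ref{prop:dave-wt-1}), so it takes at most $2s$ queries on average to evaluate $s$ gates.'' Your proposal spells this out carefully---in particular, your charging argument (each fresh source read is an independent fair coin, and a fresh $1$ at an $\bfor$ gate forces a short-circuit) makes the per-gate bound of $2$ rigorous even in the presence of fan-out, a subtlety the paper's terse proof leaves implicit.
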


\begin{proof}
Notice that the average cost of evaluating each AND/OR/NOT gate does not exceed $2$ (Proposition \ref{prop:dave-wt-1}). Therefore, it takes at most $2s$ queries on average to evaluate $s$ gates.
\end{proof}

\begin{definition} 
    A \emph{$p$-random restriction}, denoted by $\prandrtrc$, is a distribution over restrictions leaving $x_i$ unset with probability $p$ and fixing $x_i$ to $0$ or $1$ with equal probability $\frac{1}{2}(1 - p)$ independently for each $i = 1,2,\ldots, n$. 
\end{definition}

\begin{definition} [\cite{Ros17, Ros19}]
    A boolean function $f$ is \emph{$\lambda$-critical} if 
    \[
        \Pr_{\rho \sim \mathcal{R}_p} \left[ \detq(f|_\rho) \geq t \right] \leq (p\lambda)^{t}
    \]
    for any $p \in [0,1]$ and $t \in \mathbb N$.
\end{definition}

The next lemma gives an upper bound on $\avgq(f)$ for $\lambda$-critical functions. 

\begin{lemma} \label{lem:cri_to_avgq} 
    Let $f : \bitset^n \rightarrow \bitset$ be $\lambda$-critical. Then \begin{equation} \label{eq:cri-to-avgq}
        \avgq(f) \le n \left(1 - \frac{1}{ \lambda}\right) + 2\sqrt{\frac{n}{\lambda}}.
    \end{equation}
\end{lemma}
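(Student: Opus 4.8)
The plan is to exhibit a zero-error \emph{randomized} query algorithm whose expected cost meets the bound; this suffices because $\avgq(f)$ equals the average-case zero-error randomized query complexity (as noted right after the definition of $\avgq$). First I would dispose of the trivial regime $\lambda \ge n$: there the right-hand side of \eqref{eq:cri-to-avgq} is already at least $n$, since $2\sqrt{n/\lambda} \ge n/\lambda$ whenever $\lambda \ge n$, so it equals $n - \frac{n}{\lambda} + 2\sqrt{n/\lambda} \ge n$, and $\avgq(f) \le \detq(f) \le n$ holds for every function. Hence from now on I assume $\lambda < n$.

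The algorithm is a two-phase procedure with a parameter $p \in (0, 1/\lambda)$ to be fixed at the end. In phase~1, independently for each coordinate $i$, I query $x_i$ with probability $1 - p$ and leave it unqueried with probability $p$. In phase~2, let $\rho$ be the restriction fixing every queried coordinate to its revealed value and leaving the unqueried coordinates free; I then run the optimal depth-$\detq(f|_\rho)$ decision tree of the subfunction $f|_\rho$ on the remaining free coordinates and output its answer. The crucial point — and what I expect to be the main obstacle to state carefully rather than the computations — is a coupling argument: when the input $x$ is uniform, each revealed bit is uniform and the set of revealed coordinates is chosen independently of $x$, so $\rho$ is distributed exactly as a $p$-random restriction $\mathcal{R}_p$, and conditioned on $\rho$ the free coordinates of $x$ remain uniform. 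This is what guarantees both that the phase-2 tree correctly outputs $f|_\rho$ evaluated at the free part of $x$, which equals $f(x)$ with zero error, and that the expected phase-2 cost is governed by the criticality bound.

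For the cost, I split the expected number of queries across the two phases. Phase~1 queries each coordinate with probability $1 - p$, contributing $(1-p)n$ in expectation. Phase~2 contributes at most $\detq(f|_\rho)$, and since this is a nonnegative integer random variable, $\mathbb{E}_{\rho \sim \mathcal{R}_p}[\detq(f|_\rho)] = \sum_{t \ge 1} \Pr_{\rho \sim \mathcal{R}_p}[\detq(f|_\rho) \ge t] \le \sum_{t \ge 1} (p\lambda)^t = \frac{p\lambda}{1 - p\lambda}$, where the inequality is exactly $\lambda$-criticality and the geometric series converges because $p\lambda < 1$. Combining the two phases gives $\avgq(f) \le (1-p)n + \frac{p\lambda}{1 - p\lambda}$.

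The last step is the one-variable optimization. Writing $q = p\lambda \in (0,1)$, I minimize $n\bigl(1 - \frac{q}{\lambda}\bigr) + \frac{q}{1-q}$; setting the derivative to zero yields $(1-q)^2 = \lambda/n$, i.e. $q = 1 - \sqrt{\lambda/n}$, equivalently $p = \frac{1}{\lambda}\bigl(1 - \sqrt{\lambda/n}\bigr)$, which indeed lies in $(0,1)$ precisely because $\lambda < n$. Substituting back gives $(1-p)n = n\bigl(1 - \frac{1}{\lambda}\bigr) + \sqrt{n/\lambda}$ and $\frac{q}{1-q} = \sqrt{n/\lambda} - 1$, so the total is $n\bigl(1 - \frac{1}{\lambda}\bigr) + 2\sqrt{n/\lambda} - 1$, within the claimed bound. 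I expect the computations (the tail-sum identity, the geometric sum, and this optimization) to be entirely routine; the only point demanding care is the coupling in phase~2 that legitimizes applying the optimal tree for $f|_\rho$ and ties the phase-2 cost to the criticality inequality.
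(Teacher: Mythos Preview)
Your proof is correct and follows essentially the same approach as the paper: apply a $p$-random restriction in a first phase, run an optimal decision tree for $f|_\rho$ in a second phase, bound the second-phase cost via the criticality tail and a geometric series, then optimize $p$. The only cosmetic differences are that the paper parametrizes via $p=\frac{1}{(1+\epsilon)\lambda}$ and optimizes over $\epsilon$ (equivalently your $q=\frac{1}{1+\epsilon}$), and that the paper sums the tail from $t=0$ rather than $t=1$, which is why your final constant is $-1$ better; you also make the trivial case $\lambda\ge n$ explicit, whereas the paper silently assumes $n/\lambda\ge1$.
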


\begin{proof}
    Let $\epsilon > 0$ and $p = \frac{1}{(1 + \epsilon) \lambda}$. Since $f$ is $\lambda$-critical, we have $\Pr_{\rho \sim \mathcal{R}_p}[\detq(f|_{\rho}) \geq t] \leq (1 + \epsilon)^{-t}$. 
    Consider a query algorithm that queries each variable independently with probability $1-p$, and then applies a worst-case optimal query algorithm to $f|_\rho$. We have
    \begin{align*}
        \avgq(f) 
        &\leq  \mathop{\mathbb E}_{\rho \sim \mathcal{R}_p} [|\supp(\rho)| + \detq(f|_\rho)]\\
        &=  n(1 - p) + \sum_{t = 1}^n\Pr_{\rho \sim \mathcal{R}_p} [\detq(f|_{\rho}) \ge t] \\
        &\leq  n \left( 1 - p \right) + \sum_{t = 0}^\infty \frac{1}{(1 + \epsilon)^t} \\
        & =  n \left( 1 - \frac{1}{(1+\epsilon)\lambda} \right) + \frac{1 + \epsilon}{\epsilon} \\
        & = n \left(1 - \frac{1}{\lambda}\right) + \frac{n}{\lambda} \cdot \frac{\epsilon}{1+ \epsilon} + \frac{1+\epsilon}{\epsilon}.
    \end{align*}
    Let $\alpha = \frac{n}{\lambda} \ge 1$. The function $h(\epsilon) = \frac{\alpha \epsilon}{1 + \epsilon} + \frac{1 + \epsilon}{\epsilon}$ attains its minimum at $\epsilon = \frac{1}{\sqrt{\alpha} - 1}$, where $h(\frac{1}{\sqrt{\alpha} - 1}) = 2\sqrt{\alpha}$. Thus,
    \[
        \avgq(f) \le n \left(1 - \frac{1}{\lambda}\right) + 2\sqrt{\frac{n}{\lambda}}. 
    \]
\end{proof}

\begin{remark}
Alternatively, one can prove $\avgq(f) \le n(1 - \frac{1}{2\lambda}) + O(1)$ by combining the OS inequality $\avgq(f) \leq \log \dtsize(f)$ \cite{OS06} and the bound $\dtsize(f) \leq O(2^{n(1-\frac{1}{2\lambda})})$ \cite{Ros19}. 
\end{remark}

By combining Lemma \ref{lem:cri_to_avgq} with the existing bounds on criticality for CNFs, bounded-depth circuits, and formulas \cite{Has86, Ros17, Ros19, Has14, HMS23}, the following upper bounds on $\avgq(f)$ can be derived.

\begin{corollary} [\cite{Has86}] \label{cor:crit-has86-avgq} Let $f: \bitset^n \rightarrow \bitset$ be computable by a CNF/DNF of width $w$. Then
    \[
        \avgq(f) \leq n\left(1 - \frac{1}{O(w)}\right).
    \] 
\end{corollary}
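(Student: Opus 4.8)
The plan is to combine the two ingredients that the paper has already assembled, namely the criticality bound from Håstad's switching lemma and Lemma~\ref{lem:cri_to_avgq}, which converts criticality into an average-case query upper bound. First I would recall that Håstad's switching lemma, stated earlier in terms of criticality, asserts that every width-$w$ CNF (and dually every width-$w$ DNF) is $O(w)$-critical. Since the notion of criticality is manifestly self-dual---a function $f$ is $\lambda$-critical if and only if $\neg f$ is, because $\detq(f|_\rho) = \detq((\neg f)|_\rho)$ for every restriction $\rho$, and complementing a CNF of width $w$ yields a DNF of width $w$---the same $O(w)$ bound applies to width-$w$ DNFs. This gives a concrete constant $C$ with $\lambda \le Cw$.

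Next I would simply feed $\lambda = O(w)$ into the inequality of Lemma~\ref{lem:cri_to_avgq}, which reads
\[
\avgq(f) \le n\left(1 - \frac{1}{\lambda}\right) + 2\sqrt{\frac{n}{\lambda}}.
\]
Substituting $\lambda \le Cw$ gives $\avgq(f) \le n\bigl(1 - \frac{1}{Cw}\bigr) + 2\sqrt{\frac{n}{Cw}}$. The only remaining task is to verify that the lower-order error term $2\sqrt{n/\lambda}$ gets absorbed into the main bound so that the whole expression can be written as $n\bigl(1 - \frac{1}{O(w)}\bigr)$. Writing the two terms over a common denominator, $n - \frac{n}{Cw} + 2\sqrt{\frac{n}{Cw}} = n\bigl(1 - \frac{1}{Cw} + \frac{2}{\sqrt{Cwn}}\bigr)$, so it suffices to observe that the correction $\frac{2}{\sqrt{Cwn}}$ is dominated by $\frac{1}{Cw}$ up to a constant factor whenever $w = O(n)$, which holds since $w \le n$ is the only meaningful regime for an $n$-variable function. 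Concretely $\frac{2}{\sqrt{Cwn}} \le \frac{1}{2Cw}$ once $n \ge 16 Cw / 1$, and for small $n$ the bound is trivial; either way the negative term of order $\frac{1}{w}$ survives with a possibly larger constant, yielding $\avgq(f) \le n\bigl(1 - \frac{1}{O(w)}\bigr)$.

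I expect the proof to be essentially a one-line invocation, so there is no serious obstacle; the only point demanding a sentence of care is confirming the direction of the inequality when the error term is subtracted rather than added to the dominant $1/\lambda$ saving, i.e.\ checking that the $2\sqrt{n/\lambda}$ term does not overwhelm the $n/\lambda$ gain and flip the sign of the improvement over the trivial bound $n$. Since $n/\lambda \ge 2\sqrt{n/\lambda}$ exactly when $\sqrt{n/\lambda} \ge 2$, i.e.\ $n \ge 4\lambda$, the saving is genuine in the regime of interest, and the statement follows. The whole argument is therefore: apply self-duality to extend Håstad's criticality bound to DNFs, plug $\lambda = O(w)$ into Lemma~\ref{lem:cri_to_avgq}, and fold the error term into the $O(w)$ constant.
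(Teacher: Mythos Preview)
Your proposal is correct and matches the paper's approach exactly: the corollary is stated as an immediate consequence of plugging H{\aa}stad's $O(w)$-criticality bound into Lemma~\ref{lem:cri_to_avgq}, and the paper gives no further argument beyond that one-line invocation. Your additional remarks on self-duality and on absorbing the $2\sqrt{n/\lambda}$ error term are more detail than the paper supplies, but they are accurate and harmless.
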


\begin{corollary} [\cite{Ros17, Ros19}] \label{cor:crit-ros17-avgq} Let $f: \bitset^n \rightarrow \bitset$ be computable by a CNF/DNF of size $s$. Then
    \[
        \avgq(f) \leq n\left(1 - \frac{1}{O(\log s)}\right).
    \] 
\end{corollary}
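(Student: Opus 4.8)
The plan is to obtain this corollary in exactly the same way as Corollary~\ref{cor:crit-has86-avgq}: feed a criticality bound into Lemma~\ref{lem:cri_to_avgq}. The only new ingredient is Rossman's switching lemma \cite{Ros17, Ros19}, which states that every size-$s$ CNF is $O(\log s)$-critical. I would fix the hidden constant $c_0$ so that every size-$s$ CNF is $\lambda$-critical with $\lambda = c_0\log s$; since a $\lambda$-critical function is automatically $\lambda'$-critical for every $\lambda' \ge \lambda$, it is harmless to use this upper bound in place of the true criticality. To cover the DNF case, I would note that if $f$ is a size-$s$ DNF then $\neg f$ is a size-$s$ CNF by De Morgan's laws, and criticality is invariant under negation because $\detq(f|_\rho) = \detq((\neg f)|_\rho)$ for every restriction $\rho$; hence $f$ itself is $O(\log s)$-critical, and it suffices to treat CNFs.

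With $\lambda = c_0\log s$ in hand, Lemma~\ref{lem:cri_to_avgq} immediately gives
\[
\avgq(f) \le n\left(1 - \frac{1}{\lambda}\right) + 2\sqrt{\frac{n}{\lambda}}.
\]
The remaining task is to fold the additive error term $2\sqrt{n/\lambda}$ into the multiplicative factor $1 - \frac{1}{O(\log s)}$. I would do this by the elementary observation that $2\sqrt{x} \le \frac{1}{2}x$ whenever $x \ge 16$; applying it with $x = n/\lambda$ shows that, as long as $n \ge 16\lambda$, the error term is at most $\frac{1}{2}\cdot\frac{n}{\lambda}$, so that
\[
\avgq(f) \le n - \frac{n}{\lambda} + \frac{n}{2\lambda} = n\left(1 - \frac{1}{2\lambda}\right) = n\left(1 - \frac{1}{2c_0\log s}\right),
\]
which is the claimed $n\bigl(1 - \frac{1}{O(\log s)}\bigr)$ bound.

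The one point needing care, and the closest thing to an obstacle, is the regime $n < 16\lambda$, i.e.\ $\log s = \Omega(n)$. There the square-root term is no longer of lower order, and the criticality argument saves only an additive $O(1)$ over the trivial bound $\avgq(f) \le n$, so the factor $1 - \frac{1}{O(\log s)}$ degenerates to $1 - \Theta(1/n)$ and the statement is tight only up to an additive constant (consistent with $\avgq(\bfxor_n) = n$, where the parity CNF has width $n$ and size $2^{n-1}$). I would therefore either state and prove the clean bound in the meaningful regime $\log s \le n/(16c_0)$, where the absorption above is exact, or invoke the Remark after Lemma~\ref{lem:cri_to_avgq} to get $\avgq(f) \le n\bigl(1 - \frac{1}{2\lambda}\bigr) + O(1)$; in either formulation the additive slack is harmless precisely when $\log s = o(n)$, and the proof collapses to the two-line combination of Rossman's switching lemma with Lemma~\ref{lem:cri_to_avgq}.
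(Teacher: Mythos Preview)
Your proposal is correct and matches the paper's approach exactly: the paper does not give a separate proof for this corollary but simply lists it as an immediate consequence of combining Lemma~\ref{lem:cri_to_avgq} with Rossman's $O(\log s)$-criticality bound. Your careful handling of the square-root error term and the edge case $\log s = \Omega(n)$ actually goes beyond what the paper writes out; the paper treats the absorption as implicit in the $O(\cdot)$ notation (and, as you noticed, the Remark after Lemma~\ref{lem:cri_to_avgq} provides the alternative $n(1-\tfrac{1}{2\lambda})+O(1)$ route), so your analysis is if anything more rigorous than the original.
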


\begin{corollary} [\cite{Has14, Ros17}] \label{cor:crit-has14-avgq}  Let $f: \bitset^n \rightarrow \bitset$ be computable by a circuit of depth $d$ and size $s$. Then
\[
\avgq(f) \leq n \left(1 - \frac{1}{O(\log s)^{d-1}}\right).
\]
\end{corollary}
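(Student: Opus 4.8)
The plan is to combine the $\ac$-circuit criticality bound with Lemma~\ref{lem:cri_to_avgq}; once those two previously established ingredients are in hand, the corollary is almost immediate, and all the genuine work lies in them. Concretely, the switching lemma for $\ac$ circuits (Håstad~\cite{Has14}, Rossman~\cite{Ros17, Ros19}) asserts that every depth-$d$ size-$s$ circuit is $\lambda$-critical for some $\lambda \le (C \log s)^{d-1}$, where $C$ is a universal constant. This is exactly the hypothesis of Lemma~\ref{lem:cri_to_avgq}, so I would simply feed this value of $\lambda$ into that lemma.

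Carrying this out, Lemma~\ref{lem:cri_to_avgq} yields
\[
\avgq(f) \le n\left(1 - \frac{1}{\lambda}\right) + 2\sqrt{\frac{n}{\lambda}},
\]
and it remains only to fold the lower-order term $2\sqrt{n/\lambda}$ back into the main saving $n/\lambda$. In the meaningful regime $\lambda \le n/16$ one has $2\sqrt{n/\lambda} \le \frac{1}{2}\cdot\frac{n}{\lambda}$, so that
\[
\avgq(f) \le n\left(1 - \frac{1}{\lambda}\right) + \frac{n}{2\lambda} = n\left(1 - \frac{1}{2\lambda}\right).
\]
Since $2\lambda \le 2(C\log s)^{d-1} = (C'\log s)^{d-1}$ with $C' = 2^{1/(d-1)}C = O(1)$ for $d \ge 2$ (and $\lambda = O(1)$ trivially when $d = 1$), the quantity $2\lambda$ is still of the form $O(\log s)^{d-1}$, which is exactly the claimed bound. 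I note that the two earlier corollaries follow by the identical argument, replacing the input criticality with Håstad's $\lambda = O(w)$ for width-$w$ CNF/DNF (Corollary~\ref{cor:crit-has86-avgq}) and Rossman's $\lambda = O(\log s)$ for size-$s$ CNF/DNF (Corollary~\ref{cor:crit-ros17-avgq}).

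Since the heavy lifting is done by the cited criticality bounds and by Lemma~\ref{lem:cri_to_avgq}, the only genuine point to check is the absorption of the additive error term $2\sqrt{n/\lambda}$ into the $O(\cdot)$ appearing in the base. This is the one place I would be careful: the absorption is clean precisely when $\lambda = O(n)$, i.e.\ when $(\log s)^{d-1} = O(n)$, which is exactly the regime in which the stated bound carries information. When $\lambda$ is comparable to or exceeds $n$, the nominal saving $n/\lambda$ is only $O(1)$, the estimate of Lemma~\ref{lem:cri_to_avgq} degenerates, and the statement reduces to the trivial $\avgq(f) \le n$, so no additional argument is needed there. Thus the entire proof is a short composition, with the term-absorption bookkeeping being the sole (and minor) obstacle.
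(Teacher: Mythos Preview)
Your proposal is correct and follows exactly the approach the paper intends: the corollary is stated without a standalone proof, the paper simply remarking that it is obtained ``by combining Lemma~\ref{lem:cri_to_avgq} with the existing bounds on criticality'' from \cite{Has14, Ros17, Ros19}. Your explicit absorption of the $2\sqrt{n/\lambda}$ term and the handling of the degenerate regime $\lambda \gtrsim n$ are in fact more careful than what the paper spells out.
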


\begin{corollary} [\cite{HMS23}] \label{cor:crit-hms23-avgq}
Let $f: \bitset^n \rightarrow \bitset$ be computable by a formula of depth $d$ and size $s$. Then
\[
\avgq(f) \leq n \left(1 - \frac{1}{O(\frac{1}{d}\log s)^{d-1}}\right).
\]
\end{corollary}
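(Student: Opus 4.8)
The plan is to derive Corollary~\ref{cor:crit-hms23-avgq} by combining the formula criticality bound of Harsha et al.\ \cite{HMS23} with the general criticality-to-query bound already established as Lemma~\ref{lem:cri_to_avgq}, and then tidying up the lower-order term. No new combinatorial argument is required: the entire content is an assembly of two results stated earlier in the excerpt together with a constant-factor calculation.

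First I would invoke \cite{HMS23}, as quoted in the introduction: every depth-$d$ size-$s$ $\ac$ formula $f$ is $\lambda$-critical with $\lambda = O\!\left(\frac{1}{d}\log s\right)^{d-1}$. Substituting this $\lambda$ into Lemma~\ref{lem:cri_to_avgq} immediately yields
\[
\avgq(f) \le n\left(1 - \frac{1}{\lambda}\right) + 2\sqrt{\frac{n}{\lambda}}.
\]
The remaining task is to absorb the additive error $2\sqrt{n/\lambda}$ into the multiplicative savings $n/\lambda$, so that the whole expression takes the advertised form $n\bigl(1 - \tfrac{1}{O(\lambda)}\bigr)$. Concretely, I would check that whenever $\lambda \le n/16$ one has $2\sqrt{n/\lambda} \le \frac{n}{2\lambda}$ (equivalently $n/\lambda \ge 16$, obtained by squaring), and therefore
\[
\avgq(f) \le n\left(1 - \frac{1}{\lambda}\right) + \frac{n}{2\lambda} = n\left(1 - \frac{1}{2\lambda}\right) = n\left(1 - \frac{1}{O\!\left(\frac{1}{d}\log s\right)^{d-1}}\right),
\]
which is exactly the claimed bound, since the factor $2$ and the constant hidden in the criticality estimate merge into the single $O(\cdot)$ in the denominator.

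The only point requiring care, and where I expect the bookkeeping to matter, is the regime of validity. Lemma~\ref{lem:cri_to_avgq} is itself derived under the hypothesis $\alpha = n/\lambda \ge 1$, i.e.\ $\lambda \le n$, and the absorption step above is clean only once $\lambda \le n/16$. This is precisely the interesting regime: for $\ac$ formulas of constant depth $d$ and size $s = \poly(n)$ one has $\lambda = O(\log n)^{d-1} = \polylog(n)$, which lies far below $n$ for all large $n$, so the hypothesis $\lambda \le n/16$ holds automatically and the stated bound follows. In the complementary degenerate regime $\lambda > n$ the target inequality carries no information beyond the trivial $\avgq(f) \le n$, so nothing further is needed there. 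Thus the main obstacle is not conceptual but purely a matter of confirming the constant-factor absorption and recording the (mild) range of $\lambda$ in which the statement is meaningful.
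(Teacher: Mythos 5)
Your proposal follows exactly the paper's route: the paper derives this corollary (like Corollaries \ref{cor:crit-has86-avgq}--\ref{cor:crit-has14-avgq}) precisely by plugging the $O(\frac{1}{d}\log s)^{d-1}$-criticality bound of \cite{HMS23} into Lemma \ref{lem:cri_to_avgq}, leaving the absorption of the additive $2\sqrt{n/\lambda}$ term into the $O(\cdot)$ implicit. Your explicit verification that $2\sqrt{n/\lambda} \le \frac{n}{2\lambda}$ once $\lambda \le n/16$ (and the remark about the degenerate regime) merely makes that implicit bookkeeping precise, so the argument is correct and essentially identical to the paper's.
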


It is natural to ask whether the upper bounds above are tight. A positive answer would suggest that random restrictions with the same probability $p$ (as was used in the proof of the aforementioned results) are optimal.
Toward this goal, we prove Theorem \ref{thm:intro_circuit_lb}, which says there exists a DNF of width $w$ and size $\lceil2^w / w\rceil$ such that $\avgq(f) = n (1- \frac{\log n}{\Theta(w)})$. 

Here, we provide an outline of the proof and briefly explain how to find such a DNF formula. 
In contrast to the $O(1)$ average cost to determine the output of the OR function  under a uniform input distribution (Proposition \ref{prop:dave-wt-1}), it costs $n(1 - o(1))$ on average under a $p$-biased input distribution when $p = o(1/n)$ (Exercise 8.65 in \cite{ODo14}). Our approach is to employ a biased function $g$ (given by Theorem \ref{thm:intro_bnm_lb}) with $p = \Pr_{x \in \bitset^n}[g(x) = 1]$ and $\avgq(g) = n(1 - o(1))$ as a ``simulator'' of $p$-biased variable. Then, we show the composition $\bfor \circ g$ is hard to query under a uniform distribution and is computable by a somewhat small DNF formula. As such, Theorem \ref{thm:intro_circuit_lb} follows.

\begin{proof} [Proof of Theorem \ref{thm:intro_circuit_lb}]
    Let $m = \lceil \frac{2^w}{2n} \rceil$ and $h = \lfloor \frac{n}{w} \rfloor$ and $s = \lceil 2^w/ w \rceil$. Observe that 
    \[
    mh \le \left(\frac{2^w }{2n} + 1\right)\cdot \frac{n}{w} = \frac{2^w}{2w} + \frac{n}{w} \le \left\lceil \frac{2^w}{w} \right\rceil = s.
    \]
    Since $n \ge w \ge 2\log n$, we have 
    \(
        m \le \frac{2^w }{2n} + 1  \le 2^{n-1}
    \)
    and
    \(
        m \ge \frac{2^w }{2n} \ge \frac{n^2}{2n}= \frac{n}{2} \ge 4\log w.
    \)
    By Theorem \ref{thm:intro_bnm_lb}, there exists $g \in \mathcal{B}_{w,m}$ such that
    \[
    d = \min_{y \in \bitset^w} \cert_y(g) \ge \log \frac{m}{\log w} - O \left(\log \log \frac{m}{\log w}\right) = w - \log n - O(\log w).
    \]
    Let $p = \frac{m}{2^w}$ denote the probability that  $g$ outputs $1$. Note that $p \le \frac{1}{2n} + \frac{1}{2^w} \le \frac{1}{n}$.
         
    Let
    \begin{equation}
    \label{equ:def_f_disjoint}
    f(x) = \bigvee_{k = 1}^{h} g(x^{(k)}),
    \end{equation}
    where $x = (x^{(1)}, \ldots , x^{(h)}) \in \bitset^n$ and $x^{(1)}, \ldots , x^{(h)} \in \bitset^w$. It is obvious that $f$ is computable by a DNF formula of width $w$ and size $mh \le s$, because each individual $g$ is computable by a DNF formula of width $w$ and size $m$. 
     
    Let $T$ be any query algorithm computing $f$. Let us condition on the event $g(x^{(1)}) = \cdots = g(x^{(h)}) = 0$, which happens with probability $(1-p)^h$. The algorithm $T$ needs to query at least $d$ variables to evaluate each clause $g(x^{(k)})$. 
    If $g(x^{(1)}) = \cdots = g(x^{(h)}) = 0$, then $T$ queries at least $h d$ variables. Thus,
    \begin{align}
    \avgq(f) \ge 
    &\ h d \cdot \Pr_{x \in \bitset^n}\left[g(x^{(1)}) = \cdots = g(x^{(h)}) = 0 \right] && \nonumber\\
    \ge&\ n \left(1 - \frac{\log n + O(\log w)}{w}\right) (1 - p)^h & & \Big(d \ge w - \log n - O(\log w)\Big) \nonumber\\ 
    \ge&\ n \left( 1 - \frac{\log n +O(\log w)}{w} \right) (1 - ph) && \Big(1-ph \le (1 -p)^h  \Big) \nonumber\\
    =&\ n \left(1 - \frac{\log n}{\Omega(w)}\right).  & & \left(ph \le \frac{1 }{w} \right)\label{equ:thm1.3-lb}
    \end{align}
    
    On the other hand, we can query $f$ by evaluating all the clauses one by one. By Lemma \ref{lem:ub_naive}, $\avgq(g) \le \log m + 2 \le w - \log n + 2$.  Thus, 
    \begin{equation}
        \avgq(f) \le h \cdot \avgq(g) \le n \left(1 - \frac{\log n - 2}{w}\right) = n \left(1 - \frac{\log n}{O(w)}\right). \label{equ:thm1.3-ub}
    \end{equation}
    Finally, combining \eqref{equ:thm1.3-lb} and \eqref{equ:thm1.3-ub}, we conclude $\avgq(f) = n \left(1 - \frac{\log n}{\Theta(w)}\right)$. 
\end{proof}

\section{Conclusion}

In this paper, we studied the average-case query complexity of boolean functions under the uniform distribution. We prove an upper bound on $\avgq(f)$ in terms of its weight; on the other hand, we prove that for almost all fixed-weight boolean functions, the upper bound is tight up to an additive logarithmic term. We show that, for any $w \ge 2\log n$, there exists a DNF formula of width $w$ and size $\lceil2^w / w\rceil$ such that $\avgq(f) = n (1 - \frac{\log n}{\Theta(w)})$, which suggests that the criticality bounds $O(w)$ and $O(\log s)$ are tight up to a multiplicative $\log n$ factor.

Theorems \ref{thm:intro_bnm_ub} and \ref{thm:intro_bnm_lb} essentially relate \(\avgq(f)\) to the zero-order Fourier coefficient \(\widehat{f}(\{\emptyset\})\). 
Establishing an upper bound on $\avgq(f)$ in terms of higher-order Fourier coefficients (such as influences) would be valuable. 
For example, it is unclear whether the lower bound $\avgq(\mathrm{Hex}_{L \times L}) \ge L^{1.5 + o(1)}$ is tight \cite{PSS+07}; bounding $\avgq(f)$ in terms of Fourier coefficients might shed light on the open problem.

It remains open to prove tight upper bounds on $\avgq(f)$ for $k$-DNF, as well as for bounded depth formulas and circuits.

\subsection*{Acknowledgements}

We are grateful to the anonymous reviewers for their valuable feedback.

\printbibliography

\appendix

\begin{appendices}

\section{Penalty shoot-out function} \label{apdx:pso}

Besides the AND/OR functions, which are highly biased, there are monotone balanced functions such that the gap between $\avgq(f)$ and $\detq(f)$ is arbitrarily large. 

Let us define the \emph{penalty shoot-out function} $\pso_n: \bitset^{2n+1} \rightarrow \bitset$ as follows. Consider an $n$-round penalty shoot-out in a football game.  In each round, two teams, A and B, each take a penalty kick in turn, with team A going first. 
Let $x_{2i - 1} = 1$ indicate the event team A scores in the $i$-th round, and let $x_{2i} = 0$ --- this is to make the function \emph{monotone} --- indicate the event that team B scores in the $i$-th round for $i = 1, 2, \ldots, n$. 
The game continues until one team scores \emph{and} the other does not (within the same round). 
If no winner is declared after $2n$ kicks, an additional kick by team A decides the game. In this final kick, if team A scores, team A wins and $\pso_n(x) = 1$; otherwise, team B wins and $\pso_n(x) = 0$.  To the best of our knowledge, $\pso_n$ is first studied here.

Assume both teams have equal probabilities of scoring, that is, $\pso_n$ is defined under a uniform distribution. The function $\pso_n$ is a monotone balanced function with $\avgq(\pso_n) = O(1)$ and $\detq(\pso_n) = \Theta(n)$. 

\begin{proposition} \label{prop:pso-dave}
    $\detq(\pso_n) = 2n+1$ and $\avgq(\pso_n) = 4 - \frac{3}{2^n}$. 
\end{proposition}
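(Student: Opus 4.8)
The plan is to establish the two claims separately. For $\detq(\pso_n) = 2n+1$, I would argue that the decision tree must sometimes query every one of the $2n+1$ variables, and that no more are ever needed. For the worst-case lower bound, I would exhibit an adversary input where no winner is declared in any of the $n$ rounds (both teams score or both miss in each round, forcing the game into the final tiebreaker kick), which forces all $2n$ round-variables to be read plus the deciding kick $x_{2n+1}$, for a total of $2n+1$ queries. The upper bound is immediate since there are only $2n+1$ variables.

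\textbf{For the average-case claim} $\avgq(\pso_n) = 4 - \frac{3}{2^n}$, I would analyze the natural round-by-round query algorithm and prove optimality. The key structural observation is that the game is decided in round $i$ exactly when the two kicks in that round \emph{differ}: encoding team A scoring as $x_{2i-1}=1$ and team B scoring as $x_{2i}=0$, a decisive round is one where $(x_{2i-1},x_{2i}) \in \{(1,1),(0,0)\}$ — wait, I must be careful with the monotone encoding, so I would first restate precisely which of the four outcomes of a round $\{00,01,10,11\}$ is decisive. In each round the algorithm queries the two relevant bits (costing $2$ per round), and with probability $\frac{1}{2}$ the round is decisive and the algorithm halts, while with probability $\frac{1}{2}$ it proceeds to the next round; if all $n$ rounds are indecisive, one final query of $x_{2n+1}$ settles the game.

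\textbf{The core computation} is then a geometric-type sum. Letting $D_k$ denote the expected cost of the subgame with $k$ rounds remaining, I would set up the recurrence $D_k = 2 + \frac{1}{2} D_{k-1}$ with an appropriate base case accounting for the final tiebreaker kick, mirroring the inductive style already used in Proposition~\ref{prop:dave-wt-1} and Lemma~\ref{lem:ub_naive}. Solving this recurrence yields a closed form converging to $4$ with the correction term $-\frac{3}{2^n}$. To argue this round-by-round algorithm is \emph{optimal} (not merely an upper bound), I would note that within each undecided round both bits are genuinely needed — knowing only one bit of a round cannot determine whether that round is decisive, so any correct zero-error tree must query both — which pins down the average cost exactly rather than just bounding it.

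\textbf{The main obstacle} I anticipate is the optimality argument for the average-case bound: giving the upper bound via the explicit algorithm is routine, but proving the matching lower bound requires showing no cleverer query order does better. I expect this reduces to the observation that the two kicks of a round are informationally ``entangled'' (the decisiveness of a round is an $\bfxor$-like predicate on its two bits, and $\bfxor$ forces reading both bits), so any optimal tree effectively processes rounds as atomic units; formalizing this entanglement-forces-both-queries step, and verifying that reordering across rounds yields no savings because each round is independent and identically distributed, is the delicate part. Once that is in place, the arithmetic for $4 - \frac{3}{2^n}$ is a short induction.
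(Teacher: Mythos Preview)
Your approach is essentially the same as the paper's: an adversary input (all rounds tied) for $\detq(\pso_n)=2n+1$, and the recurrence $D_k = 2 + \tfrac{1}{2}D_{k-1}$ with base case $D_0=1$ for $\avgq$, solved by induction. The paper writes exactly this recurrence (in the equivalent form $\avgq(\pso_n)=\tfrac12\cdot 2 + \tfrac12(\avgq(\pso_{n-1})+2)$) and the same base case.

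One point worth flagging: you correctly identify that the round-by-round algorithm only gives an \emph{upper} bound and that optimality needs a separate argument. The paper's proof in fact asserts the recurrence as an equality without justifying why no algorithm can beat the round-by-round strategy, so you are being more careful here than the paper itself. Your sketched lower-bound idea (decisiveness of a round is an $\bfxor$-type predicate on its two bits, so both must be read; independence across rounds precludes savings from interleaving) is the right intuition, though turning it into a clean lower bound takes a bit of work---for instance, for $\pso_1$ there are optimal trees that query $x_3$ first, so ``process rounds as atomic units in order'' is not the unique optimal structure, merely one optimal structure. A cleaner route is to show that for every input $x$, the minimum certificate size $\cert_x(\pso_n)$ equals the cost of the round-by-round algorithm on $x$, which immediately gives the matching lower bound on $\avgq$.
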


\begin{proof}
In the worst case, the winner cannot be declared until the last round is finished, so $\detq(\pso_n) = 2n+1$.
    
We prove $\avgq(\pso_n) = 4 - \frac{3}{2^n}$ by induction on $n$. When $n = 0$, $\avgq(\pso_n) = 1$, because one kick by team A decides the game.
Assuming $\avgq(\pso_{n-1}) = 4 - \frac{3}{2^{n-1}}$ holds, let us prove $\avgq(\pso_n) = 4 - \frac{3}{2^n}$. In each round, there are four cases:
    \begin{itemize}
        \item[(1)] Team A scores, and team B does not.
        \item[(2)] Team B scores, and team A does not.
        \item[(3)] Both teams score.
        \item[(4)] Neither scores.
    \end{itemize}
    Each case happens with equal probability $\frac{1}{4}$. If case 1) or 2) happens, the winner is decided; if case 3) or 4) happens, the game will continue. Therefore, we have
    \begin{align*}
        \avgq(\pso_{n}) & = \frac{1}{2} \cdot 2 + \frac{1}{2} \cdot \left(\avgq(\pso_{n-1}) + 2\right) \\
        & = \frac{1}{2} \left(4 - \frac{3}{2^{n-1}}\right) + 2\\
        & = 4 - \frac{3}{2^n},
    \end{align*}
    completing the induction proof. 
\end{proof}

\end{appendices}

\end{document}